\definecolor{labelkey}{rgb}{0.6,0,1}
\numberwithin{equation}{section}
\renewcommand{\d}{{\rm d}}
\newcommand{\e}{{\rm e}}
\newcommand{\E}{{\mathbb E}}
\renewcommand{\P}{{\mathbb P}}
\newcommand{\Q}{{\mathbb Q}}
\newcommand{\R}{{\mathbb R}}
\newcommand{\Acal}{{\mathcal A}}
\newcommand{\Ecal}{{\mathcal E}}
\newcommand{\Fcal}{{\mathcal F}}
\newcommand{\Gcal}{{\mathcal G}}
\DeclareMathOperator{\tr}{trace}
\newcommand{\1}[1]{{\boldsymbol 1_{\{#1\}}}}
\newcommand{\oo}{{\bf 1}}
\newtheorem{theorem}{Theorem}
\newtheorem{assumption}{Assumption}
\newtheorem{corollary}{Corollary}
\newtheorem{definition}{Definition}
\newtheorem{example}{Example}
\newtheorem{lemma}{Lemma}
\newtheorem{proposition}{Proposition}
\begin{document}

\title[]{Non-Equivalent Beliefs and\\Subjective Equilibrium Bubbles}
\author[M.~Larsson]{Martin Larsson}
\address{Martin Larsson\\Swiss Finance Institute\\ 
Ecole Polytechnique F\'ed\'erale de Lausanne\\
Lausanne, Switzerland}
\email{martin.larsson@epfl.ch}
\date{\today}

\keywords{Asset pricing bubbles, dynamic equilibrium, non-equivalent beliefs}
   
%\subjclass[2000]{60G70 (Primary), 60F05, 60G10 (Secondary)}

\thanks{Swiss Finance Institute, Ecole Polytechnique F\'ed\'erale de Lausanne. E-mail: martin.larsson@epfl.ch}
\thanks{I would like to thank Michalis Anthropelos, Julien Hugonnier, Robert Jarrow, Semyon Malamud, Loriano Mancini, and Paul Schneider for very stimulating discussions and valuable comments.}

\begin{abstract}
This paper develops a dynamic equilibrium model where agents exhibit a strong form of belief heterogeneity: they disagree about zero probability events. It is shown that, somewhat surprisingly, equilibrium exists in this setting, and that the disagreement about nullsets naturally leads to equilibrium asset pricing bubbles. The bubbles are subjective in the sense that they are perceived by some but not necessarily all agents. In contrast to existing models, bubbles arise with no restrictions on trade beyond a standard solvency constraint.
\end{abstract}

\maketitle

\section{Introduction}

Since the work of \citet{Santos/Woodford:1997} and \citet{Loewenstein/Willard:2000,Loewenstein/Willard:2006} it is well understood that, under fairly broad economic assumptions, the equilibrium price of an asset in positive net supply must equal its fundamental value, defined as the smallest replication cost of the associated cash flows. Such models can therefore not be used to describe rational asset pricing bubbles, defined as a wedge between the market price and the fundamental value, as the result of an equilibrium mechanism. On the other hand, the occurrence of bubbles in real markets makes it desirable to develop equilibrium models that can accommodate them, and within which their properties can be analyzed. To date this has been achieved by introducing market frictions, such as trading constraints, into the standard setting. This prevents investors from exploiting or scaling up arbitrage opportunities, and allows bubbles to persist (see below for a review of this literature.)

The present article deviates from the classical framework in a different manner. Instead of being subjected to trading constraints, the agents are endowed with heterogeneous beliefs. This heterogeneity is taken to a rather extreme degree: agents do not even agree about zero probability events. It is far from clear from the outset that an equilibrium can exist in such a setting; one objection will be discussed momentarily. Our first contribution is to show that, somewhat surprisingly, this is indeed possible. Secondly, once the equilibrium has been found, it is shown that asset pricing bubbles arise naturally.

Regarding existence of equilibrium, a simple (but incorrect) argument can be given to suggest that matters are hopeless. Suppose an equilibrium with two agents is given, where some time~$T$ event $D$ (``downturn'') is assigned positive probability by agent~1, but zero probability by agent~2. Agent~1 is then willing to purchase insurance against $D$, say by buying an indicator asset $\oo_D$ at a positive price. Agent~2 is happy to supply this asset, since his valuation of it is zero. But as the price is positive, agent~2 would like to scale up his position. This violates individual optimality and contradicts the existence of equilibrium.

This argument has a key flaw. Because, in order to implement the suggested transaction, the agents must trade dynamically over time to ensure that the correct amount is delivered at time~$T$. Agent~1 believes this amount will be either zero or one, whereas agent~2 is convinced the amount will be zero. Nonetheless, the position of agent~2 may run intermediate deficits, which prevents him from scaling up the position indefinitely without risking insolvency. This is analogous to the well-known mechanism that allows bubbles to exist without violating arbitrage restrictions. Of course, agent~2 will scale his position as much as possible, which means that under certain scenarios he will come very close to insolvency---and if $D$ does, in fact, occur (even though agent~2 thought it impossible), he will go bankrupt. It turns out that this last statement holds in full generality, and is what allows us to obtain a consistent equilibrium; it enables us to deal with the otherwise uncomfortable question of what the agent does when an event he thought was impossible nonetheless occurs. A detailed discussion of this point is given in Section~\ref{S:nores}.

Once in equilibrium, the mechanism leading to bubbles can be understood as follows. An agent computes the fundamental value of an asset as the smallest amount needed to replicate its cash flows \emph{almost surely}. However, when there is disagreement about nullsets, the notion of ``almost sure replication'' varies between different agents. As a result it can happen that an agent assigns zero probability to some scenarios that other agents believe are possible. In this case, he does not have to replicate the cash flows in those scenarios, and is therefore able to carry out the replication at a lower cost. This leads to disagreement about fundamental values, and thus automatically to bubbles, since all agents must agree on the (endogenous and observed) market price. Both bubbles and fundamental values are thus \emph{subjective} in that they depend on the beliefs of the agent computing them. It turns out that this dependence is only through the nullsets. In particular this implies that no bubbles will be present if all agents agree about nullsets, even if beliefs are otherwise heterogeneous.

As soon as equivalence of beliefs is dropped, a range of new phenomena appears. We analyze some of these by means of various explicit examples. The ``downturn'' situation discussed above is formalized in a setting where one agent is optimistic in the sense that he believes the stock dividend cannot fall below some given level (see Section~\ref{S:ex1}). This agent will then try to exploit a perceived bubble by shorting the riskless asset and holding the stock as collateral to avoid insolvency. If the dividend process nonetheless falls below the given level, the agent goes bankrupt. The presence of bubbles implies that no equivalent martingale measure exists relative to this agent's beliefs. However, the other agent does not see any bubbles, and for him an equivalent (true) martingale measure can be found. Section~\ref{S:ex2} contains a variation on the same theme. We also study an economy with two agents, where \emph{both agents simultaneously perceive bubbles} (see Section~\ref{S:ex3}). In this setting the bubble can either persist until one of the agents goes bankrupt, or burst at an earlier point in time. A fourth example (Section~\ref{S:ex4}) is concerned with a two-agent, two-stock economy, where both agents perceive bubbles. Here an additional interesting phenomenon occurs: not only do the two agents agree about the existence of a bubble, they also agree about the statistical properties of the bubble on the market portfolio, in the sense that its (unconditional) law is the same under the two agents' beliefs.

Let us give a brief overview of some related literature. The paper most closely related to ours is~\citep{Hugonnier:2012}, where it is shown that rational asset pricing bubbles can arise in equilibria where some, but not all, agents face restrictions on trade. In a similar vein, there are several papers where portfolio constraints lead to mispricings, for instance \citep{Cuoco:1997,Basak/Cuoco:1998, Basak/Croitoru:2000}. Without such constraints, and with homogenous beliefs, bubbles are not possible as long as agents must maintain nonnegative wealth, see~\citep{Loewenstein/Willard:2000}. In the present paper, bubbles arise without any restrictions on trade, other than the nonnegative wealth restriction.

Another line of research considers exogenously specified arbitrage-free models for asset prices. Here bubbles (or relative arbitrage) occur by construction, without any equilibrium mechanism justifying their presence. Representative papers include \citep{Sin:1998, Heston/Loewenstein/Willard:2007, Jarrow/Protter/Shimbo:2006, Jarrow/Protter/Shimbo:2010, Fernholz/Karatzas/Kardaras:2005, Fernholz:2010fk, Ruf:2011}. Our results have consequences for the interpretation of this type of models, since bubbles are subjective in our setting: some agents perceive them, while others do not. The appearance of bubbles in a given asset pricing model could therefore in principle be due to a misspecification of the historical probability measure, assigning zero probability to scenarios (such as defaults or market crashes) that may in fact not be impossible events. The present paper should be viewed as a contribution toward understanding to what extent such exogenously specified pricing models are supported by equilibrium price formation mechanisms.

A third line of research is a large body of work dealing with general equilibrium models with heterogeneous beliefs, a small sample of which is \citep{Basak:2000,Basak:2005,Berrada/Hugonnier/Rindisbacher:2007,Jouini/Napp:2007,Cvitanic/etal:2012}. However, all these papers assume that agents agree about zero probability events, with the consequence that no bubbles occur. Non-equivalent beliefs occur in \citep{Epstein/Ji:2013} in the context of ambiguous volatility, although the setting of that paper is very different from ours.

%The third line of research is concerned with speculative behavior, primarily building on the seminal paper \citep{Harrison/Kreps:1978}; see also, e.g., \citep{Scheinkman/Xiong:2003} and \citep{Hong/Stein:2003}. While this literature uses a different notion of bubbles (not based on cash flow replication), and hence is arguably less relevant to the current work, it is worth mentioning that these studies generally assume agents are subject to portfolio constraints. In the present paper, bubbles arise without any restrictions on trade (other than a standard solvency condition).

The rest of this paper is structured as follows. Section~\ref{S:NE} describes the model setup and the resulting equilibrium. An important issue in the context of non-equivalent beliefs is that some agents may become insolvent in finite time. This leads us to introduce the notion of \emph{No~Resurrection}. Section~\ref{S:B} discusses the occurrence of bubbles, in particular making precise the concept of subjective bubbles. Section~\ref{S:ex} is devoted to examples, and Section~\ref{S:concl} concludes. All proofs are deferred to the~Appendix.

Throughout this paper we work on a probability space $(\Omega,\mathcal F, \mathbb F, \P)$ equipped with a filtration $(\Fcal_t)_{0\le t\le T}$ for a bounded time set $[0,T]$, generated by some $n$-dimensional Brownian motion $X=(X_1,\ldots,X_n)$. Expectation under $\P$ is denoted $\E[\,\cdot\,]$, and we use the shorthand notation $\E_t[\,\cdot\,]=\E[\,\cdot \mid\Fcal_t]$. For integrable functions $f$ we write $\int_a^b f(s)\d s=\int_{[a,b)}f(s)\d s$. In particular, this quantity is equal to zero whenever~$b\le a$.

\section{Equilibrium with non-equivalent beliefs} \label{S:NE}

We consider a multi-agent economy with heterogeneous beliefs, where the agents may disagree about zero-probability events. To a large extent this material parallels well-known developments. However, there are certain delicate points arising from the non-equivalence of beliefs that will be commented on in some detail. It should be emphasized that the belief heterogeneity is the only point where our model differs from standard dynamic equilibrium models with intermediate consumption as described, for instance, in \citep{Duffie:2001}.

%As the purpose of the present paper is not to develop this type of equilibrium models in detail, and since the developments are very close to standard material, we try to be brief. However, there some delicate points arising from the non-equivalence of the beliefs that we comment on in some detail.

\subsection{Agents and preferences}

Finitely many agents, indexed by $k=1,2,\ldots,K$, maximize expected utility from consumption over the time interval $[0,T]$. The preferences of agent~$k$ are given by his beliefs, determined by a probability measure $\P_k\ll \P$, a time preference rate $\rho>0$ (which for simplicity we take to be the same for all agents), and a utility function $u_k:(0,\infty)\to\R$. It is assumed that $u_k$ is continuously differentiable, strictly increasing, strictly concave, and satisfies Inada conditions at zero and infinity. These assumptions are standard. The utility agent $k$ achieves from a consumption plan $c=(c_t)_{0\le t\le T}$ is given by
\[
U_k(c) = \E^k\left[ \int_0^T e^{-\rho t} u_k(c_t) \d t \right],
\]
whenever the right side is well-defined, and $-\infty$ otherwise. Here $\E^k[\,\cdot\,]$ denotes expectation under $\P_k$. This preference structure is standard, with the exception that \emph{$\P \ll \P_k$ is not assumed}. Similarly, we do not impose any absolute continuity relationship between $\P_k$ and $\P_\ell$ for $k\ne\ell$. This is what drives all the subsequent results on equilibrium bubbles.

Before describing the consumption plans and investment strategies available to the agents, we establish some notation regarding the agents' beliefs. Let
\[
Z_{kt} = \E_t\left[ \frac{\d \P_k}{\d \P} \right], \qquad 0\le t\le T,
\]
be the density process associated with $\P_k$, and define
\[
\tau_k = \inf\left\{t\in [0, T]: Z_{kt} = 0\right\}.
\]
(As usual, we set $\inf \emptyset = \infty$.) While it is by no means clear at this stage, it will be shown later on that $\tau_k$ corresponds to a bankruptcy time of agent~$k$. It is easy to verify that $\P(\tau_k\le T)>0$ holds if and only if $\P_k$ and $\P$ are not equivalent, and that
\begin{equation} \label{eq:locequiv}
\P_k\mid_{\Fcal_t\cap\{t<\tau_k\}}\ \sim\ \P\mid_{\Fcal_t\cap\{t<\tau_k\}} \text{ holds for all } t\in[0,T].
\end{equation}
Heuristically this means that $\P_k$ and $\P$ are ``equivalent strictly prior to $\tau_k$''. This will often let us deduce that various properties of interest hold for all $0\le t\le T$, $\P_k$-a.s., if and only if they hold for all $0\le t<\tau_k$, $\P$-a.s. Next, notice that we have
\begin{equation} \label{eq:Pk0}
\P_k(\tau_k\le T) = \E\left[ Z_{kT}\1{\tau_k\le T}\right] = 0.
\end{equation}
Due to the interpretation alluded to above of~$\tau_k$ as a bankruptcy time, this simple means that agent~$k$ believes that he will never become bankrupt. The following assumption will always be in force.

\begin{assumption} \label{A:max}
We have $\P(\max_k \tau_k = \infty)=1$.
\end{assumption}

Given the above interpretation of $\tau_k$, this assumption means that there is always some agent who remains solvent up to time~$T$. Note that this does not exclude situations where $\P(\tau_k\le T)>0$ for all~$k$, as long as in every state of world, there is some $k$ (depending on the state) for which $\tau_k=\infty$. Assumption~\ref{A:max} is not a real restriction, since it can be achieved by replacing $\P$ by $(\P_1+\cdots+\P_K)/K$.

\subsection{Assets and strategies} \label{S:NE1}

We now describe the consumption plans and investment strategies available to the agents. The framework is again standard, apart from the disagreement about nullsets. The financial market consists of $n$ stocks in unit net supply, and a money market account in zero net supply. The $i$:th stock is a claim on a dividend rate process $D_i=(D_{it})_{0\le t\le T}$, whose dynamics is given by
\[
D_{it} = D_{i0} + \int_0^tD_{is} a_{is} \d s + \int_0^t D_{is} v_{is}^\top \d X_s,
\]
for an exogenously given $\mathbb R$-valued drift $a_i=(a_{it})_{0\le t\le T}$ and $\mathbb R^n$-valued volatility $v_i=(v_{it})_{0\le t\le T}$, assumed to satisfy the integrability condition
\begin{equation} \label{eq:int_e}
\int_0^T \left( |a_{it}| + \| v_{it} \|^2 \right) \d t < \infty \quad \P\text{-a.s.}
\end{equation}
This implies that the dividend processes are strictly positive. So is the aggregate dividend rate $D=(D_t)_{0\le t\le T}$, defined by $D_t=D_{1t}+\cdots+D_{nt}$, which consequently can be written
\begin{equation} \label{eq:D}
D_t = \sum_{i=1}^n D_{it} = D_0 + \int_0^t D_s a_s \d s + \int_0^t D_s v_s^\top \d X_s
\end{equation}
for some drift $a=(a_t)_{0\le t\le T}$ and volatility vector $v=(v_t)_{0\le t\le T}$.

The market prices $S=(S_{1t},\ldots,S_{nt})_{0\le t\le T}$ of the stocks, as well as the risk-free rate $r=(r_t)_{0\le t\le T}$, are determined endogenously in equilibrium; the former within the class of It\^o processes of the form
\begin{equation} \label{eq:Si}
S_{it} + \int_0^t D_{is} \d s = S_{i0} + \int_0^t S_{is}\mu_{is} \d s + \int_0^t S_{is} \sigma_{is}^\top \d X_s, \qquad 0\le t\le T,
\end{equation}
with the integrability condition
\[
\int_0^T \left(|r_r| +  \| \mu_t \| + \tr (\sigma_t\sigma_t^\top) \right) \d t  < \infty \quad \P\text{-a.s.}
\]
Here $\mu_t=(\mu_{1t},\ldots,\mu_{nt})$, and $\sigma_t$ is the matrix with rows $\sigma_{it}^\top$. We let
\begin{equation} \label{eq:S0}
S_{0t} = \exp\left( \int_0^t r_s ds \right)
\end{equation}
denote the value of the money market account. The market portfolio is denoted
\[
\overline S_t = S_{1t} + \cdots + S_{nt}.
\]
We focus on equilibria where the market is complete. In the present setting where the number of Brownian motions coincides with the number of risky assets, this corresponds to~$\sigma_t$ being invertible $\P\otimes\d t$-a.e. This leads to a unique market price of risk given by
\[
\theta_t = \sigma_t^{-1}(\mu_t - r_t{\bf 1}), \qquad \int_0^T \|\theta_t\|^2 \d t < \infty \quad \P\text{-a.s.}
\]
The associated (possibly non-normalized) state price density is
\begin{equation}\label{eq:SPD}
\xi_t = \xi_0 \exp\left(-\int_0^t r_s \d s -\int_0^t \theta_s^\top \d X_s - \frac{1}{2}\int_0^t \|\theta_s\|^2 \d s \right),
\end{equation}
whose characteristic property is that the deflated cum-dividend stock prices are local martingales; Indeed, by It\^o's formula,
\begin{equation} \label{eq:defl}
\xi_t S_{it} + \int_0^t \xi_s D_{is} \d s \quad \text{is a local $\P$-martingale.}
\end{equation}
While it will turn out that limited arbitrage opportunities may exist in equilibrium, we can always safely assume that a market price of risk process and state price density as above exist, since this is a minimal condition under which the agents' utility maximization problems will have optimal solutions with finite optimal value; see \citep{Karatzas/Shreve:1998, Duffie:2001}, as well as \citep{Karatzas/Kardaras:2007} for a thorough discussion. We write $\xi_{kt}$ for the $k$:th agent's state price density, which is only defined prior to~$\tau_k$. It is given by
\begin{equation} \label{eq:xik}
\xi_{kt} = \frac{\xi_t}{Z_{kt}}, \qquad 0\le t<\tau_k.
\end{equation}
Due to~\eqref{eq:locequiv} this implies that it is defined for all $0\le t\le T$, $\P_k$-a.s. In view of \eqref{eq:defl}, Bayes' rule (see Lemma~\ref{L:Bayes} in the Appendix) implies that $\xi_{kt}S_{it} + \int_0^t\xi_{ks}D_{is}\d s$ is a local $\P_k$-martingale.

A \emph{trading strategy} is a pair $(\phi,\pi)=(\phi_t,\pi_t)_{0\le t\le T}$ of predictable processes, where $\phi_t$ is the amount invested in the money market account at time $t$, while $\pi_t=(\pi_t^1,\ldots,\pi_t^n)$ gives the amounts invested in the stocks. These processes are required to satisfy
\[
\int_0^T \left( |\phi_sr_s + \pi_s^\top \mu_{ks}|  + \| \sigma_s^\top \pi_s \|^2 \right) \d s < \infty \quad \P_k\text{-a.s.},
\]
where $\mu_k$ is the drift vector of the stocks under~$\P_k$, obtainable via Girsanov's theorem\footnote{By \citep[Theorem~III.41]{Protter:2005}, it is given in terms of $\mu$, $\sigma$, and $Z_k$ by $\mu_{kt} = \mu_t + Z_{kt}^{-1} \sigma_t \frac{\d \langle X,Z_k\rangle_t}{\d t}$ for $t\in[0,T]$, $\P_k$-a.s.}
(note that the volatility is unaffected by absolutely continuous changes of probability measure.) If the associated wealth process is nonnegative,
\[
W_t(\phi,\pi) = \phi_t + \oo^\top \pi_t \ge 0, \quad 0\le t\le T, \quad \P_k\text{-a.s.},
\]
the strategy is called \emph{$\P_k$-admissible}. Given a consumption plan~$c\ge0$, a time point $t_0\in[0,T]$, and an $\Fcal_{t_0}$-measurable random variable $W_{t_0}$ (the time~$t_0$ wealth), a strategy $(\phi,\pi)$ is called \emph{$\P_k$-self-financing} if its wealth process satisfies\footnote{An equivalent condition is obtained by replacing $\mu$ by $\mu_k$, and $X$ by $X^k=X-\int Z^{-1}_s \d\langle X,Z_k\rangle$, which is $n$-dimensional Brownian motion under $\P_k$.}
\begin{equation} \label{eq:sfk}
W_t(\phi,\pi) = W_{t_0} + \int_{t_0}^t (\phi_sr_s + \pi_s^\top \mu_s - c_s) \d s + \int_{t_0}^t \pi_s^\top \sigma_s \d X_s, \quad t_0\le t\le T, \quad \P_k\text{-a.s.}
\end{equation}
A consumption plan for which such a strategy $(\phi,\pi)$ exists is called \emph{$\P_k$-feasible (given time~$t_0$ wealth $W_{t_0}$)}, and is said to be \emph{financed} by $(\phi,\pi)$.\footnote{One could replace the admissibility condition by suitable integrability requirements that guarantee that $\xi_{kt}W_t(\phi,\pi)+\int_0^t\xi_{ks} W_s(\phi,\pi)\d s$ is a martingale under $\P_k$. However, as discussed in \citep[Footnote~7]{Hugonnier:2012}, the resulting equilibrium is not affected by such a modification.} Note that above notions really depend on $\P_k$ in general. However, this dependence is only through the nullsets, and is not present if all $\P_k$ are equivalent. In this case all agents have access to the same set of strategies. The following (essentially standard) result characterizes the class of $\P_k$-feasible consumption plans in terms of a static budget constraint.

\begin{proposition} \label{P:budget}
$(i)$ A  consumption plan $c=(c_t)_{0\le t\le T}$ is $\P_k$-feasible given time~$t_0$ wealth $W_{t_0}$ if and only if
\[
\frac{1}{\xi_{kt_0}}\E^k_{t_0} \left[ \int_{t_0}^T \xi_{ks}c_s\d s\right] \le W_{t_0},\quad \P_k\text{-a.s.}
\]
$(ii)$ Let $C_T$ be a nonnegative $\Fcal_T$-measurable random. There exists a $\P_k$-self-financing strategy (with zero consumption and time~$t_0$ wealth $W_{t_0}$) such that its wealth process $W$ satisfies $W_T=C_T$ if and only if
\[
\frac{1}{\xi_{kt_0}}\E^k_{t_0} \left[ \xi_{kT}C_T \right] \le W_{t_0},\quad \P_k\text{-a.s.}
\]
\end{proposition}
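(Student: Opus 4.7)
The proof follows the standard martingale duality approach for complete markets (cf.\ \citep{Karatzas/Shreve:1998,Duffie:2001}), with the caveat that one must work consistently under $\P_k$ rather than $\P$ owing to the non-equivalence. Throughout I would operate on the $\P_k$-full-probability event $\{\tau_k=\infty\}$ (recall \eqref{eq:Pk0}), so that $\xi_{kt}$ is well-defined and strictly positive for every $t\in[0,T]$; the local equivalence \eqref{eq:locequiv} then lets one translate $\P$-a.s.\ statements concerning $\xi$, $S_i$, and $Z_k$ into $\P_k$-a.s.\ statements before $\tau_k$, hence $\P_k$-a.s.\ on all of $[0,T]$.

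The preliminary fact I need is that both $\xi_{kt}S_{0t}$ and $\xi_{kt}S_{it}+\int_0^t\xi_{ks}D_{is}\d s$ are local $\P_k$-martingales; the second is already noted in the excerpt just after \eqref{eq:xik}, and the first follows analogously from Bayes' rule since $\xi_tS_{0t}$ is a local $\P$-martingale by construction of $\xi$. For the ``only if'' direction in both parts, given a $\P_k$-feasible strategy $(\phi,\pi)$ financing $c$ from time-$t_0$ wealth $W_{t_0}$, I would apply It\^o's formula to $\xi_{kt}W_t(\phi,\pi)+\int_{t_0}^t\xi_{ks}c_s\d s$ using the self-financing equation~\eqref{eq:sfk} together with the dynamics of $\xi_k$. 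The drift terms cancel by the local martingale property just recorded, leaving a local $\P_k$-martingale; the nonnegativity of $W$ and $c$ upgrades this to a $\P_k$-supermartingale, and taking $\E^k_{t_0}$ while discarding the nonnegative terminal contributions yields both budget inequalities.

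For the converse I would construct the replicating strategy by martingale representation. After reducing to the case of equality in the budget constraint (by inflating $c$ or $C_T$ by a suitable $\Fcal_T$-measurable nonnegative adjustment so that the budget becomes tight), define the $\P_k$-martingale
\[
M_t=\E^k_t\left[\xi_{kT}C_T+\int_{t_0}^T\xi_{ks}c_s\d s\right],\qquad t\in[t_0,T],
\]
so that $M_{t_0}=\xi_{kt_0}W_{t_0}$. Since $X^k$ is a $\P_k$-Brownian motion generating the filtration $\mathbb F$ up to $\P_k$-null sets, Brownian martingale representation produces a predictable $\psi$ with $M_t=M_{t_0}+\int_{t_0}^t\psi_s^\top\d X^k_s$. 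Setting $\xi_{kt}W_t:=M_t-\int_{t_0}^t\xi_{ks}c_s\d s$ and applying It\^o to $W_t=(\xi_{kt}W_t)/\xi_{kt}$ pins down the diffusion coefficient of $W$; invertibility of $\sigma_t$ then solves uniquely for $\pi_t$, and $\phi_t=W_t-\oo^\top\pi_t$ fixes the money-market position. One verifies the self-financing equation and the integrability requirements directly, and nonnegativity $W_t\ge 0$ follows from the identity $\xi_{kt}W_t+\int_{t_0}^t\xi_{ks}c_s\d s=M_t\ge 0$.

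The one genuinely delicate step is the appeal to Brownian martingale representation \emph{under $\P_k$}: I need that every $(\mathbb F,\P_k)$-martingale can be written as a stochastic integral with respect to $X^k$. Because $\mathbb F$ is generated by $X$, and $X^k$ differs from $X$ only by an absolutely continuous drift, this is standard; absolute continuity $\P_k\ll\P$ ensures that the $\P_k$-augmentation of $\mathbb F$ satisfies the usual conditions, so the classical representation theorem applies on $[0,T]$. Everything else is routine stochastic calculus, carried out with the understanding that all equalities are meant $\P_k$-a.s.
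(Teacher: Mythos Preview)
Your proposal is correct and follows essentially the same route as the paper: the paper's proof is deliberately terse, simply pointing to the standard complete-market argument (Karatzas--Shreve, Theorem~3.5) and isolating the one nontrivial ingredient, namely martingale representation for $(\mathbb F,\P_k)$-martingales with respect to $X^k$, for which it invokes \citep[Theorem~III.5.24]{Jacod/Shiryaev:2003}. Your write-up expands the supermartingale and replication arguments in detail and correctly flags the same delicate point; the only minor refinement is that the representation result is best justified as a stability property under the absolutely continuous change $\P_k\ll\P$ (as in Jacod--Shiryaev) rather than by asserting that $X^k$ generates $\mathbb F$, which is not immediate.
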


The $k$:th agent is endowed with some positive initial wealth $w_k>0$, which comes in the form of $\alpha_{k0}$ units of the riskless asset, and $\alpha_k=(\alpha_{k1},\ldots,\alpha_{kn})$ units (nonnegative) of the stocks. To ensure compatibility with market clearing, we require $\alpha_{10}+\cdots+\alpha_{K0}=0$, $\alpha_{1i}+\cdots+\alpha_{Ki}=1$, $i=1,\ldots,n$. The agent maximizes his expected utility $U_k(c)$ over all $\P_k$-feasible consumption plans $c$, given initial wealth~$w_k$. In view of Proposition~\ref{P:budget}, the $k$:th agent's optimization problem is:
\[
\max_{c\ge 0} \ \E^k\left[ \int_0^T e^{-\rho t} u_k(c_t) \d t \right] \quad \text{s.t.} \quad \frac{1}{\xi_{k0}}\E^k\left[\int_0^T \xi_{kt}c_t\d t\right] \le w_k.
\]

\subsection{Individual optimality and No Resurrection} \label{S:nores}

Two important issues that arise in the context of non-equivalent beliefs are addressed in this subsection.

The first concern is the following. Since $\P_k(\tau_k\le T)=0$, the $k$:th agent cannot distinguish between consumption plans that only differ for $t\ge\tau_k$ (and similarly for trading strategies). The resulting non-uniqueness of optimizers is troubling: if $\tau_k<T$, the economic activities continue after $\tau_k$, and the actions of the $k$:th agent do influence the equilibrium. This necessitates some rule for determining the actions of agent~$k$ after~$\tau_k$.

The need for such a rule is reinforced by our second concern, which arises from the following fact: The optimal consumption plan is fixed by the agent at time~0, when he does not believe $\tau_k$ can ever occur. If nonetheless $\tau_k$ does occur, it seems unsatisfactory to force the agent to insist on his belief that $\tau_k$ cannot happen. It would be more reasonable for him to revise his beliefs, and continue trading. But how should these new beliefs be determined? And even if a compelling choice is possible, why did the agent from the outset fail to anticipate the potential need for these new beliefs?

Fortunately there is a simple and natural way to resolve these issues. It relies on the following result, which characterizes the optimal solution to the utility maximization problem faced by agent~$k$.

\begin{proposition} \label{P:opt}
An optimal $\P_k$-feasible consumption plan for the $k$:th agent's optimization problem is given by
\begin{equation} \label{eq:FOCk}
c_{kt} = I_k\left(y_k\xi_{kt}\e^{\rho t} \right) \1{t<\tau_k},
\end{equation}
where $I_k = (u_k')^{-1}$, and $y_k$ is chosen so that $\xi_{k0}^{-1}\E^k[\int_0^T\xi_{kt}c_{kt}\d t]=w_k$ (this is always possible.) The corresponding wealth process satisfies
\begin{equation} \label{eq:WkPk}
W_{kt} = \frac{1}{\xi_{kt}} \E^k_t \left[ \int_t^T \xi_{ks} c_{ks}\d s\right], \quad 0\le t\le T, \quad \P_k\text{-a.s.}
\end{equation}
The optimizer $c_k$ is unique up to $\P_k\otimes\d t$-a.e.~equivalence.
\end{proposition}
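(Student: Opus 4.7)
The plan is to invoke Proposition~\ref{P:budget}(i) to convert the dynamic utility maximization into a static concave program and solve the latter via a pointwise Lagrangian. For each $y>0$, the function $c\mapsto \e^{-\rho t}u_k(c) - y\xi_{kt}c$ is strictly concave on $(0,\infty)$ and attains its unique maximum at $c=I_k(y\xi_{kt}\e^{\rho t})$ on $\{t<\tau_k\}$, where $\xi_{kt}\in(0,\infty)$ by~\eqref{eq:xik}. Since $\P_k(\tau_k\le T)=0$ by~\eqref{eq:Pk0}, multiplying by $\1{t<\tau_k}$ makes~\eqref{eq:FOCk} unambiguously defined on all of $\Omega$ without affecting any $\P_k$-expectation. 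Monotone convergence together with the Inada conditions then shows that
\[
F(y) = \frac{1}{\xi_{k0}}\E^k\!\left[\int_0^T \xi_{kt}\,I_k(y\xi_{kt}\e^{\rho t})\1{t<\tau_k}\d t\right]
\]
is continuous and strictly decreasing on $(0,\infty)$ with $F(0+)=\infty$ and $F(\infty)=0$, so a unique $y_k>0$ satisfies $F(y_k)=w_k$, and the corresponding $c_k$ saturates the static budget.

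Optimality of $c_k$ will follow from the standard concavity estimate: for any other $\P_k$-feasible $c\ge0$, the first-order condition gives, on $\{t<\tau_k\}$,
\[
\e^{-\rho t}u_k(c_t) - \e^{-\rho t}u_k(c_{kt}) \le \e^{-\rho t}u_k'(c_{kt})(c_t-c_{kt}) = y_k\xi_{kt}(c_t-c_{kt}).
\]
Integrating over $[0,T]$, taking $\P_k$-expectation, and using that the budget binds for $c_k$ while $c$ is feasible yields $U_k(c)\le U_k(c_k)$; strict concavity of $u_k$ sharpens the inequality whenever $c_t\ne c_{kt}$, giving uniqueness up to $\P_k\otimes\d t$-null sets. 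For the wealth formula~\eqref{eq:WkPk}, let $W_k$ be the wealth process of the $\P_k$-self-financing strategy financing $c_k$ from $w_k$ that is provided by Proposition~\ref{P:budget}(i). Then $M_t=\xi_{kt}W_{kt}+\int_0^t\xi_{ks}c_{ks}\d s$ is a nonnegative $\P_k$-local martingale, hence a supermartingale; since the budget binds, $\E^k[M_T]\ge\xi_{k0}w_k=M_0$, so $M$ is in fact a true martingale and $W_{kT}=0$ $\P_k$-a.s. Taking conditional expectations in $M_T=\int_0^T\xi_{ks}c_{ks}\d s$ then produces~\eqref{eq:WkPk}.

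The one genuine subtlety is the lack of absolute continuity between $\P_k$ and $\P$, which leaves $\xi_{kt}$ defined only on $\{t<\tau_k\}$; care is needed to verify that no step silently assumes equivalence. This turns out not to be an obstacle because $\P_k$ puts no mass on $\{\tau_k\le T\}$ by~\eqref{eq:Pk0}: every $\P_k$-expectation can be computed as if $\xi_{kt}$ were defined on all of $[0,T]$, and the cutoff $\1{t<\tau_k}$ in~\eqref{eq:FOCk} is only required for the pathwise well-definedness that underlies the statement of uniqueness ``up to $\P_k\otimes\d t$-equivalence''.
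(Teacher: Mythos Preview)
Your proof is correct and follows essentially the same Lagrangian/duality approach as the paper: pointwise maximization of $\e^{-\rho t}u_k(c)-y\xi_{kt}c$, choice of the multiplier $y_k$ via the Inada conditions so that the budget binds, and uniqueness from strict concavity. Your derivation of~\eqref{eq:WkPk} via the supermartingale-to-martingale upgrade of $\xi_{kt}W_{kt}+\int_0^t\xi_{ks}c_{ks}\d s$ is more explicit than the paper's, which simply cites Proposition~\ref{P:budget} and optimality, but the underlying content is the same.
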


This result has the following consequence, which establishes our previous claim that $\tau_k$ plays the role of the time of bankruptcy of agent~$k$.

\begin{corollary} \label{C:bankrupt}
Let $c_k$ be any optimal consumption plan for the $k$:th agent's optimization problem, and let $W_k$ be the corresponding wealth process. We then have
\[
\lim_{t\uparrow\tau_k} c_{kt} = \lim_{t\uparrow\tau_k} W_{kt} = 0, \quad \P\text{-a.s.} \quad\text{on}\quad\{\tau_k\le T\}.
\]
\end{corollary}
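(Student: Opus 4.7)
The plan is to handle the consumption convergence pointwise from the first-order condition~(\ref{eq:FOCk}) and the Inada condition at zero, and to handle the wealth convergence by rewriting the $\P_k$-expectation in~(\ref{eq:WkPk}) as a $\P$-conditional expectation. The key output of this rewriting is that $\xi_t W_{kt}$ equals a continuous $\P$-martingale minus a continuous increasing process, which makes the limit at $\tau_k$ transparent.

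For the consumption, $Z_k$ is a continuous nonnegative local martingale in the Brownian filtration and $\tau_k$ is its first hitting time of zero, so $Z_{kt}\to 0$ as $t\uparrow\tau_k$. Since $\xi$ defined by~(\ref{eq:SPD}) is continuous and strictly positive, this forces $\xi_{kt}=\xi_t/Z_{kt}\to\infty$ on $\{\tau_k\le T\}$. The Inada condition $u_k'(0+)=\infty$ translates into $I_k(y)\to 0$ as $y\to\infty$, and substituting into~(\ref{eq:FOCk}) yields $c_{kt}\to 0$ $\P$-a.s.\ on $\{\tau_k\le T\}$.

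For the wealth, the central step is to establish
\[
\xi_t W_{kt} = \E_t\!\left[\int_t^T \xi_s c_{ks}\,\d s\right],\qquad 0\le t<\tau_k,\ \P\text{-a.s.}
\]
Starting from~(\ref{eq:WkPk}), I would apply Bayes' rule in the form $\E^k_t[Y]=Z_{kt}^{-1}\E_t[Z_{kT}Y]$ for $\Fcal_T$-measurable $Y\ge 0$, exchange expectation and time integral via Fubini--Tonelli, use the tower property to replace $Z_{kT}$ by $Z_{ks}$ under $\E_t[\,\cdot\,]$ for $s\ge t$, and finally apply $Z_{ks}\xi_{ks}=\xi_s$ on $\{s<\tau_k\}$ (the only set where $c_{ks}$ is nonzero, by~(\ref{eq:FOCk})) together with $Z_{kt}\xi_{kt}=\xi_t$. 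The same manipulation applied to the budget equality $\E^k[\int_0^T\xi_{ks}c_{ks}\d s]=\xi_{k0}w_k$ from Proposition~\ref{P:opt} shows that $\E[\int_0^T\xi_s c_{ks}\d s]=\xi_0 w_k<\infty$, so
\[
H_t:=\E_t\!\left[\int_0^T \xi_s c_{ks}\,\d s\right]
\]
is a uniformly integrable $\P$-martingale admitting a continuous version in the Brownian filtration, and on $\{t<\tau_k\}$ the displayed identity becomes $\xi_t W_{kt}=H_t-\int_0^t \xi_s c_{ks}\d s$.

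The right-hand side of this last identity is defined and continuous on all of $[0,T]$. Since $c_{ks}=0$ for $s\ge\tau_k$, on $\{\tau_k\le T\}$ the random variable $\int_0^T\xi_s c_{ks}\d s=\int_0^{\tau_k}\xi_s c_{ks}\d s$ is $\Fcal_{\tau_k}$-measurable, hence $H_{\tau_k}=\int_0^{\tau_k}\xi_s c_{ks}\d s$ and $\lim_{t\uparrow\tau_k}\xi_t W_{kt}=0$; continuity and strict positivity of $\xi$ then give $W_{kt}\to 0$. I expect the Bayes--Fubini manipulation to be the main delicate point, since one has to make sense of everything through the potentially large null set $\{\tau_k\le T\}$ where $\xi_k$ is ill-defined; this works because the implicit indicator $\1{s<\tau_k}$ in~(\ref{eq:FOCk}) confines the integrand to the set on which $Z_{ks}\xi_{ks}=\xi_s$ holds.
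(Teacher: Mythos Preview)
Your proposal is correct and follows essentially the same route as the paper: the consumption limit via~\eqref{eq:FOCk} and the Inada condition, and the wealth limit by converting~\eqref{eq:WkPk} into the $\P$-identity $\xi_t W_{kt}=\E_t[\int_0^{\tau_k}\xi_s c_{ks}\,\d s]-\int_0^t\xi_s c_{ks}\,\d s$ via Bayes' rule (Lemma~\ref{L:Bayes}) and Tonelli, then using continuity of the martingale. Your write-up is in fact slightly more explicit than the paper's at the optional-sampling step ($H_{\tau_k}=\int_0^{\tau_k}\xi_s c_{ks}\,\d s$ by $\Fcal_{\tau_k}$-measurability); the only point you leave implicit that the paper spells out is the appeal to the $\P_k\otimes\d t$-uniqueness part of Proposition~\ref{P:opt} to ensure that an \emph{arbitrary} optimal $c_k$ agrees with the formula~\eqref{eq:FOCk} prior to~$\tau_k$.
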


Corollary~\ref{C:bankrupt} shows that the $k$:th agent will become bankrupt as $t$ reaches $\tau_k$, and that his consumption will decrease to zero. This ``should'' imply that the wealth process is absorbed, since no external capital injections take place. This in turn would force the agent to stop consuming and to stop trading. The next result shows that this intuition is correct, if we require that the strategy employed by agent~$k$ be self-financing and admissible with respect to~$\P$, not just~$\P_k$. The result also shows that the optimal consumption plan given by~\eqref{eq:FOCk} indeed has this property.

\begin{proposition} \label{P:nores}
$(i)$ Let $c$ be any $\P_k$-feasible consumption plan that is optimal for agent~$k$. If in addition $c$ is $\P$-feasible, financed by $(\phi,\pi)$, with wealth process $W$, then the following \emph{No Resurrection} property holds $\P\otimes\d t$-a.e.~on $[\tau_k,T]$:
\[
W_t=0, \quad c_t=0, \quad \phi_t=0, \quad \pi_t=0.
\]
$(ii)$ The $\P_k$-feasible consumption plan $c_k$ given in~\eqref{eq:FOCk} is $\P$-feasible.
\end{proposition}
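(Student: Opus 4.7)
The plan is to treat parts~(i) and~(ii) separately. For part~(i), the idea is that the wealth $W$ of the $\P$-feasible strategy vanishes at $\tau_k$ by Corollary~\ref{C:bankrupt} together with continuity of $W$ as an It\^o process, and then a nonnegative local martingale absorption argument forces wealth and consumption to stay at zero on $[\tau_k,T]$. Concretely, It\^o's formula applied to $\xi_t W_t$ (using $\d\xi_t = -\xi_t r_t \,\d t - \xi_t \theta_t^\top \d X_t$ together with the $\P$-self-financing condition) shows that $M_t := \xi_t W_t + \int_0^t \xi_s c_s \,\d s$ is a local $\P$-martingale. Since $W_{\tau_k} = 0$ on $\{\tau_k \le T\}$, the shifted process $N_t := M_t - M_{t \wedge \tau_k}$ is a local $\P$-martingale starting at $0$ which, for $t \ge \tau_k$, coincides with $\xi_t W_t + \int_{\tau_k}^t \xi_s c_s \,\d s \ge 0$. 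Any nonnegative local martingale starting at $0$ is identically zero (it is a supermartingale with $\E[N_t] \le 0$ and $N \ge 0$), so positivity of $\xi$ gives $W_t = 0$ on $[\tau_k,T]$ $\P$-a.s.\ and $c_t = 0$ $\P \otimes \d t$-a.e.\ on $[\tau_k,T]$.

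Having $W \equiv 0$ on $[\tau_k,T]$, the self-financing equation~\eqref{eq:sfk} forces $(\phi_t r_t + \pi_t^\top \mu_t - c_t)\,\d t + \pi_t^\top \sigma_t \,\d X_t = 0$ on that interval. Matching the martingale parts gives $\pi_t^\top \sigma_t = 0$ $\P \otimes \d t$-a.e., and invertibility of $\sigma_t$ (market completeness) then yields $\pi_t = 0$; then $\phi_t = W_t - \oo^\top \pi_t = 0$ follows immediately.

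For part~(ii), the key observation is that, via Bayes' rule, the $\P_k$-budget constraint for $c_k$ coincides with the analogous constraint under $\P$. Using $\xi_{ks} Z_{ks} = \xi_s$ on $\{s < \tau_k\}$, $c_{ks} = 0$ on $\{s \ge \tau_k\}$, and the martingale property $\E_s[Z_{kT}] = Z_{ks}$,
\[
\E^k\!\left[\int_0^T \xi_{ks} c_{ks} \,\d s\right] = \int_0^T \E\!\left[Z_{ks} \xi_{ks} c_{ks}\right] \d s = \E\!\left[\int_0^T \xi_s c_{ks} \,\d s\right].
\]
By Proposition~\ref{P:opt} the $\P_k$-budget constraint is binding, and $\xi_{k0} = \xi_0$ (as $Z_{k0} = 1$), so $\xi_0^{-1} \E[\int_0^T \xi_s c_{ks} \,\d s] = w_k$. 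A $\P$-self-financing, $\P$-admissible strategy financing $c_k$ is then constructed in direct parallel with Proposition~\ref{P:budget}(i), but under $\P$: form the $\P$-martingale $N_t := \E_t[\int_0^T \xi_s c_{ks} \,\d s]$, apply the martingale representation theorem under $\P$ (valid since $X$ is a $\P$-Brownian motion generating $\mathbb F$) to write $N_t = \xi_0 w_k + \int_0^t \psi_s^\top \d X_s$, set $W_t := \xi_t^{-1}(N_t - \int_0^t \xi_s c_{ks} \,\d s) \ge 0$, and read off $(\phi, \pi)$ from It\^o's formula.

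The main obstacle I expect is the absorption argument in part~(i): the careful verification, via optional stopping, that $N = M - M^{\tau_k}$ is genuinely a local $\P$-martingale across the (possibly finite) stopping time $\tau_k$, so that the nonnegativity-plus-supermartingale argument applies. The remaining pieces---the drift/martingale-part decomposition giving $\pi = 0$ and $\phi = 0$, the Bayes' rule computation, and the martingale representation construction---are routine, essentially mirroring Proposition~\ref{P:budget}(i) with $\P$ in place of $\P_k$.
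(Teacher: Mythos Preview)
Your proposal is correct and follows essentially the same approach as the paper's proof. For part~(i) the paper also computes that $\xi_t W_t + \int_0^t \xi_s c_s\,\d s$ is a nonnegative local $\P$-martingale (hence supermartingale), invokes Corollary~\ref{C:bankrupt} to see that $\xi_t W_t$ hits zero at $\tau_k$, and then uses absorption of nonnegative supermartingales at zero; your shifted-process argument with $N = M - M^{\tau_k}$ is just an equivalent repackaging of the same supermartingale absorption (and the optional-stopping step you flag as the ``main obstacle'' is standard). For part~(ii) the paper performs the identical Bayes'-rule computation and then simply cites Proposition~\ref{P:budget}(i) under~$\P$, rather than rewriting the martingale-representation construction as you do.
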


Note that Corollary~\ref{C:bankrupt} shows that there can be no ``positive surprises'' for an agent when his beliefs are proven wrong. That is, if an event that was initially deemed impossible nonetheless occurs, the agent will never see a surprise \emph{increase} in wealth---he will always become insolvent. This fact has nothing to do with equilibrium, but is rather a consequence of the structure of the optimal consumption plan and trading strategy.

\subsection{Equilibrium}

The preceding results show that the $k$:th agent's actions after $\tau_k$ are pinned down in a very natural way: If his consumption is to remain ``globally'' feasible (i.e., feasible under~$\P$), he must cease to consume and to trade. On the other hand, we have also seen that requiring $\P$-feasibility does not affect the optimal behavior before~$\tau_k$. This motivates the following definition of equilibrium.

\begin{definition}[Equilibrium]
An \emph{equilibrium} is a set of price processes $(S_0,S)$ of the form~\eqref{eq:Si} and~\eqref{eq:S0}, together with consumption plans and trading strategies $\{c_k, (\phi_k,\pi_k): k=1,\ldots,K\}$ such that the following conditions hold:
\begin{itemize}
\item[$(i)$] Optimality: For each $k$, $c_k$ is optimal for agent~$k$, $\P$-feasible, and financed by $(\phi_k,\pi_k)$.
\item[$(ii)$] Market clearing: $\phi_1+\cdots+\phi_K=0$, $\pi_1+\cdots+\pi_K=S$, $c_1+\cdots+c_K=D$.
\end{itemize}
\end{definition}

We can now solve for equilibrium using the same well-known procedure as in the classical case with homogeneous beliefs (or heterogeneous but equivalent beliefs). We therefore only give a rough outline. It is always understood that Assumption~\ref{A:max} is satisfied.

Propositions~\ref{P:opt} and~\ref{P:nores} imply that the optimal consumption plans are given by the first order condition~\eqref{eq:FOCk}. Summing over $k$, imposing market clearing, and using the expression~\eqref{eq:xik} for~$\xi_k$, yields
\begin{equation}\label{eq:Dt)}
D_t = \sum_{k\, :\, t<\tau_k} I_k\left( \frac{1}{Z_{kt}} y_k e^{\rho t} \xi_t\right).
\end{equation}
Now define
\[
\Phi(y; \nu_1,\ldots,\nu_K) = \sum_{k\, :\, \nu_k>0} I_k\left(\frac{y}{\nu_k}\right)
\]
and let $x\mapsto \Phi^{-1}(x;\nu_1,\ldots,\nu_K)$ be the inverse of $y\mapsto\Phi(y;\nu_1,\ldots,\nu_K)$. Provided not all~$\nu_k$ are zero, the Inada conditions imply that the inverse exists (see also the discussion in \citep[page~172]{Karatzas/Shreve:1998}.) In view of~\eqref{eq:Dt)} we obtain
\begin{equation} \label{eq:SPD_2}
\xi_t = \Phi^{-1}\left(D_t;  \frac{Z_{kt}}{y_k e^{\rho t}} : k=1,\ldots,K \right).
\end{equation}
As long as the right side is a strictly positive continuous semimartingale with absolutely continuous drift, it defines a valid candidate state price density $\xi_t$. Substituting it back into the agents' budget constraints (which are binding by Proposition~\ref{P:opt}) yields a system of equations for $y_1,\ldots,y_K$, whose solution completely specifies $\xi_t$. This in turn yields the interest rate and market price of risk via~\eqref{eq:SPD}. Existence and uniqueness of $y_1,\ldots,y_K$ can be treated as in the classical case, see for instance~\citep{Karatzas/Shreve:1998} or~\citep{Basak:2000}. In the examples we consider below, the solution is found explicitly. Given the state price density, the equilibrium stock prices can be computed; the result is given in the following proposition.

\begin{proposition} \label{P:prices}
In equilibrium, the stock prices are given by
\begin{equation}\label{eq:S1}
S_{it} = \frac{1}{\xi_t} \E_t\left[ \int_t^T \xi_s D_{is} \d s\right],
\end{equation}
which is automatically of the form~\eqref{eq:Si}. The value of the market portfolio is
\begin{equation}\label{eq:Sbar1}
\overline S_t = \frac{1}{\xi_t} \E_t\left[ \int_t^T \xi_s D_s \d s\right].
\end{equation}
\end{proposition}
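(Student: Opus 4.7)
The strategy is in two steps: first derive \eqref{eq:Sbar1} by aggregating the individual wealth formulas of Proposition~\ref{P:opt} through market clearing; then bootstrap this into \eqref{eq:S1} using the local-martingale property~\eqref{eq:defl}.

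For Step~1, Proposition~\ref{P:opt} gives $\xi_{kt}W_{kt} = \E^k_t\bigl[\int_t^T \xi_{ks}c_{ks}\,\d s\bigr]$, $\P_k$-a.s. On the set $\{t<\tau_k\}$ one has $Z_{kt}>0$, so Bayes' rule (Lemma~\ref{L:Bayes}), Fubini, and the identity $Z_{ks}\xi_{ks}=\xi_s$ valid on $\{s<\tau_k\}$ convert this into
$$\xi_t W_{kt}\ =\ \E_t\!\left[\int_t^T \xi_s c_{ks}\,\d s\right].$$
On the complementary set $\{t\geq\tau_k\}$, No~Resurrection (Proposition~\ref{P:nores}) forces $W_{kt}=0$ and $c_{ks}=0$ for $s\geq\tau_k$, so both sides vanish $\P$-a.s., and the identity extends to every $(t,\omega)$. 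Summing over $k$ and invoking the market-clearing conditions $\sum_k c_{kt}=D_t$ and $\sum_k W_{kt}=\sum_k\phi_{kt}+\oo^\top\sum_k\pi_{kt}=\overline S_t$ yields exactly \eqref{eq:Sbar1}.

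For Step~2, set $M^i_t:=\xi_t S_{it}+\int_0^t \xi_s D_{is}\,\d s$. By \eqref{eq:defl} each $M^i$ is a local $\P$-martingale and, being nonnegative (as is standard in equilibrium, with $S_{it}\geq 0$ inherited from admissibility of long positions in assets of positive supply), it is a supermartingale. Combining Step~1 with the identity $\sum_i M^i_t = \xi_t\overline S_t + \int_0^t\xi_s D_s\,\d s$ shows that
$$\sum_{i=1}^n M^i_t\ =\ \E_t\!\left[\int_0^T \xi_s D_s\,\d s\right]$$
is a true $\P$-martingale. A sum of nonnegative supermartingales that coincides with a true martingale forces every summand to be a martingale (all the supermartingale inequalities point the same way and must sum to an equality); hence each $M^i$ is a true $\P$-martingale. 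Evaluating \eqref{eq:Sbar1} at $t=T$ yields $\overline S_T=0$, whence $S_{iT}=0$ for every $i$, and the martingale property of $M^i$ applied at $T$ then delivers \eqref{eq:S1}. Finally, since the filtration is generated by $X$, martingale representation writes $N^i_t:=\E_t\bigl[\int_0^T\xi_s D_{is}\,\d s\bigr]$ as a stochastic integral against $X$, and Itô's formula applied to $S_{it}=\xi_t^{-1}(N^i_t-\int_0^t\xi_s D_{is}\,\d s)$, together with the explicit decomposition \eqref{eq:SPD} of $\xi$, produces the semimartingale form \eqref{eq:Si}.

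The main conceptual obstacle is in Step~2: a priori each $M^i$ is only a \emph{local} $\P$-martingale, and it is precisely the possible failure of the true-martingale property that generates the subjective bubbles studied later in the paper. What allows the $\P$-level equality for each individual stock is that the aggregate $\sum_i M^i$ is pinned to a true $\P$-martingale by the aggregate budget constraint established in Step~1; together with nonnegativity of each $M^i$, this upgrades the whole collection from supermartingales to martingales. The other delicate ingredient is the Bayes-rule patching in Step~1 across $\{t<\tau_k\}$ and $\{t\geq\tau_k\}$, which is made consistent precisely by No~Resurrection.
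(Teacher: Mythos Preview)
Your proof is correct and follows essentially the same route as the paper: first lift the individual wealth identities to $\P$ via Bayes' rule and No~Resurrection, sum through market clearing to obtain \eqref{eq:Sbar1}, and then upgrade each nonnegative local martingale $M^i_t=\xi_tS_{it}+\int_0^t\xi_sD_{is}\,\d s$ to a true martingale using the fact that their sum is one. The only cosmetic difference is that the paper phrases the last step as ``$M^i$ is dominated by the true martingale $\sum_j M^j$'' rather than your equivalent ``a sum of nonnegative supermartingales equal to a martingale forces each summand to be a martingale''.
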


At this point we have obtained a candidate state price density~$\xi$ and candidate market prices~$S_i$, as well as optimal wealth processes $W_k$ and consumption plans $c_k$ that satisfy $W_1+\cdots+W_k=\overline S$ and $c_1+\cdots+c_K=D$ by construction. It therefore only remains to verify that the market is complete (i.e., $\sigma_t$ is $\P\otimes\d t$-a.e.~invertible), and that the optimal trading strategies $(\phi_k,\pi_k)$ satisfy market clearing. The question of endogenous completeness is a difficult one, although some results in this direction are available, see~\citep{AndRaym:2008, Hugonnier/Malamud/Trubowitz:2012, Kramkov:2013b}. We do not discuss it further here. As for market clearing of strategies, this is a consequence of market clearing of wealth processes in the presence of market completeness. The well-known proof of this fact is omitted.

Note that we may view the state price density $\xi_t$ as arising from a representative agent with stochastic marginal utility $U'_t(x) = \Phi^{-1}(x; Z_{kt}e^{-\rho t}/y_k: k=1,\ldots,K)$ and beliefs~$\P$. It may or may not be possible to construct new beliefs~$\widetilde \P \sim \P$ and a deterministic representative utility function~$\widetilde U(x)$ that result in the same state price density $\xi_t$. This topic is discussed in~\citep{Jouini/Napp:2007}.

\begin{example}[Logarithmic investors] \label{ex:log}
Suppose $u_k(\cdot)=\log(\cdot)$ for all $k$. In this case one can provide explicit expressions for many equilibrium quantities of interest. As expected, these expressions are identical to the ones obtained in the case of heterogeneous but equivalent beliefs, see~\citep{Basak:2005}. Proofs of the following statements are given in the Appendix.

The ``reference'' state price density is given by
\begin{equation} \label{eq:xi_log}
\xi_t = \frac{1}{D_t\eta(0)} e^{-\rho t} \left( w_1 Z_{1t} + \cdots +  w_K Z_{Kt}\right),
\end{equation}
where we define
\begin{equation} \label{eq:eta}
\eta(t)=\frac{1 - e^{-\rho (T-t)}}{\rho}.
\end{equation}
The equilibrium value of the market portfolio is
\begin{equation} \label{eq:Sbarlog}
\overline S_t = D_t \int_t^T e^{-\rho (s-t)}\d s = D_t \eta(t),
\end{equation}
as is typical in models with logarithmic investors. The equilibrium consumption and wealth processes for the $k$:th agent are given by
\begin{equation} \label{eq:cklog}
c_{kt} = \frac{w_k}{e^{\rho t}\xi_{kt}\eta(0) }, \qquad 0\le t<\tau_k
\end{equation}
and
\begin{equation} \label{eq:Wklog}
W_{kt} = c_{kt}\eta(t), \qquad 0\le t\le T.
\end{equation}
Finally, we consider the interest rate~$r_t$ and market prices of risk~$\theta_t$ and~$\theta_{kt}$. Since $w_1 Z_{1t}+\cdots+w_K Z_{Kt}$ is a positive martingale, there exists an $X$-integrable process $\gamma=(\gamma_t)_{0\le t\le T}$ such that
\begin{equation} \label{eq:gammalog}
w_1 Z_{1t}+\cdots+w_K Z_{Kt}= (w_1+\cdots+w_K) \Ecal\Big( \int \gamma_s^\top \d X_s \Big)_t,
\end{equation}
where $\Ecal(\cdot)_t$ denotes stochastic exponential. In terms of the process~$\gamma$ we have
\begin{equation} \label{eq:rthetalog}
r_t = \rho+a_t-v_t^\top(v_t-\gamma_t)
\qquad\text{and}\qquad
\theta_t = v_t-\gamma_t.
\end{equation}
Moreover, we have
\begin{equation} \label{eq:gammaklog}
Z_{kt}=\Ecal\Big(\int \gamma_{ks}^\top \d X_s\Big)_t
\end{equation}
for some process $\gamma_k=(\gamma_{kt})_{0\le t<\tau_k}$ that satisfies, $\P$-a.s.,
\begin{equation} \label{eq:gammakintlog}
\int_0^t \|\gamma_{ks}\|^2 \d s<\infty \quad \text{for}\quad t<\tau_k, \quad\text{but}\quad \int_0^{\tau_k} \|\gamma_{ks}\|^2 \d s=\infty.
\end{equation}
The individual state price density of the $k$:th agent is then given by
\begin{equation} \label{eq:thetaklog}
\theta_{kt} = \theta_t + \gamma_{kt} = v_t - \gamma_t + \gamma_{kt}, \qquad t<\tau_k.
\end{equation}
\end{example}

\section{Equilibrium bubbles} \label{S:B}

The equilibrium setting with non-equivalent beliefs described above gives rise to bubbles in equilibrium which are \emph{subjective} in the sense that only some agents perceive them to be present. This section discusses this phenomenon. We first introduce the \emph{fundamental value} at time~$t$ of a cash flow $c$: it is the minimal amount needed at~$t$ to replicate that cash flow almost surely over the time interval $[t,T]$ using a self-financing admissible strategy. This definition is standard is a complete market setting, see e.g.~\citep{Loewenstein/Willard:2000, Jarrow/Protter/Shimbo:2006, Heston/Loewenstein/Willard:2007, Hugonnier:2012}. Since different agents may have different nullsets, their notion of fundamental value may differ as well. With this in mind, the fundamental value under $\P_k$ is the minimal time~$t$ wealth for which~$c$ becomes~$\P_k$-feasible over~$[t,T]$. We denote this quantity by $F^k_t(c)$. By Proposition~\ref{P:budget} and Equation~\eqref{eq:locequiv} it is given by
\[
F^k_t(c) = \frac{1}{\xi_{kt}} \E^k_t\left[ \int_t^T \xi_{ks} c_s \d s\right], \quad 0\le t < \tau_k, \quad \P\text{-a.s.}
\]
The fundamental value after $\tau_k$ is left undefined. We denote by $F_t(c)$ the fundamental value computed under the reference measure~$\P$.

The \emph{bubble} associated with a dividend-paying traded asset is defined as the difference between its market price, given endogenously in equilibrium, and its fundamental value. The bubbles on the stocks and the market portfolio, as perceived by the $k$:th agent, are thus given by
\[
B^k_{it} = S_{it} - F^k_t(D_i), \qquad \overline B^k_t = \overline S_t - F^k_t(D).
\]
A similar analysis is valid for the riskless asset. Indeed, it can be viewed as derivative paying $S_{0T}$ at time~$T$ with no intermediate cash flows. By Proposition~\ref{P:budget} and Equation~\eqref{eq:locequiv} its fundamental value and bubble, which we denote by $F^k_{0t}$ and $B^k_{0t}$, respectively, are given by
\[
F^k_{0t} = \frac{1}{\xi_{kt}} \E^k_t\left[ \xi_{kT}S_{0T} \right], \qquad B^k_{0t} = S_{0t} - F^k_{0t}.
\]
A detailed discussion of why the presence of bubbles is in fact consistent with optimal choice and absence of (unlimited) arbitrage is given in \citep[Section~3]{Hugonnier:2012}.

The expressions for the fundamental values and bubbles suggest that different agents could perceive different bubbles. It is in this sense bubbles are \emph{subjective}: they depend on the agent through his beliefs. However, this dependence is confined to the nullsets since, whenever two agents agree about zero probability events, they also agree about the size of the bubbles. This is one consequence of the following theorem, which gives the relation between fundamental values, and hence bubbles, computed under different beliefs.

\begin{theorem} \label{T:BFV}
The fundamental value under $\P$ of a cash flow $c$ can be decomposed as follows, for each $k=1,\ldots K$:
\[
F_t(c) = F^k_t(c) + \frac{1}{\xi_t} E_t\left[ \int_{\tau_k}^T \xi_s c_s \d s \right], \quad 0\le t<\tau_k, \quad \P\text{-a.s.}
\]
\end{theorem}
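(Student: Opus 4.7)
My plan is to compute $\xi_t F^k_t(c)$ directly via Bayes' rule, express it as a $\P$-expectation of an integral truncated at $\tau_k$, and then compare to $\xi_t F_t(c)$.

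First I would start from the definition
\[
\xi_{kt} F^k_t(c) = \E^k_t\left[\int_t^T \xi_{ks} c_s \d s\right],
\]
which is valid $\P_k$-a.s., hence $\P$-a.s.\ on $\{t<\tau_k\}$ by~\eqref{eq:locequiv}. Applying Bayes' rule (Lemma~\ref{L:Bayes}) gives
\[
Z_{kt}\,\xi_{kt}\, F^k_t(c) = \E_t\!\left[Z_{kT}\int_t^T \xi_{ks}c_s\d s\right],
\]
and since $\xi_{kt}Z_{kt}=\xi_t$ for $t<\tau_k$ by~\eqref{eq:xik}, the left side is $\xi_t F^k_t(c)$.

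Next I would use Fubini (legitimate since the integrand is nonnegative) together with the tower property and the $\P$-martingale property of $Z_k$ to replace the terminal density by $Z_{ks}$ inside the integral:
\[
\E_t\!\left[Z_{kT}\int_t^T \xi_{ks}c_s\d s\right]=\int_t^T \E_t\!\left[\E[Z_{kT}\mid\Fcal_s]\,\xi_{ks}c_s\right]\d s=\int_t^T\E_t[Z_{ks}\xi_{ks}c_s]\d s.
\]
Now the key point is that $Z_k$ is a nonnegative $\P$-martingale, so once it hits zero it stays at zero; hence $Z_{ks}=0$ on $\{s\ge\tau_k\}$, $\P$-a.s. Combined with $Z_{ks}\xi_{ks}=\xi_s$ on $\{s<\tau_k\}$, this yields $Z_{ks}\xi_{ks}c_s=\xi_s c_s\1{s<\tau_k}$ $\P$-a.s. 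Fubini then gives
\[
\xi_t F^k_t(c)=\E_t\!\left[\int_t^T \xi_s c_s\1{s<\tau_k}\d s\right]=\E_t\!\left[\int_t^{\tau_k\wedge T}\!\xi_s c_s\d s\right].
\]

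Finally, splitting the definition $\xi_t F_t(c)=\E_t[\int_t^T\xi_s c_s\d s]$ at $\tau_k\wedge T$ and using the paper's convention that $\int_a^b=0$ when $b\le a$,
\[
\xi_t F_t(c)=\E_t\!\left[\int_t^{\tau_k\wedge T}\!\xi_s c_s\d s\right]+\E_t\!\left[\int_{\tau_k}^{T}\xi_s c_s\d s\right]=\xi_t F^k_t(c)+\E_t\!\left[\int_{\tau_k}^{T}\xi_s c_s\d s\right].
\]
Dividing by $\xi_t>0$ on $\{t<\tau_k\}$ yields the claimed identity.

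The only subtle step is the identification $Z_{ks}\xi_{ks}c_s=\xi_s c_s\1{s<\tau_k}$, which relies on the almost sure absorption of the $\P$-martingale $Z_k$ at zero after $\tau_k$; everything else is a routine Bayes/Fubini manipulation. No substantial obstacle is anticipated.
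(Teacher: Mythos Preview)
Your proof is correct and follows essentially the same route as the paper: split the $\P$-expectation at $\tau_k$ and identify the pre-$\tau_k$ piece with $F^k_t(c)$ via Bayes' rule and Tonelli, exactly as in the paper's computation~\eqref{eq:WkP}. You merely spell out the tower-property step (replacing $Z_{kT}$ by $Z_{ks}$) more explicitly than the paper does.
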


The interpretation of this decomposition is simple: the net present value calculation conducted by agent~$k$ takes fewer payments into account; indeed, it only includes cash flows prior to $\tau_k$ (which is of course equal to infinity almost surely under~$\P_k$, see~\eqref{eq:Pk0}.) In particular, a traded asset whose dividend stream continues after $\tau_k$ will be perceived as overvalued by agent~$k$. This is (typically) the case both for the risk-free asset and the stocks, implying that their prices will have bubble components. %{\bf\color{red}Discuss consequences for relative arbitrage: in a two-agent economy, with $P_1\ll P_2 \sim P$, $P_1(\tau_1\le T)>0$. Suppose we are observing this economy using $P_1$. We see a bubble. Now there are two possibilities: either $P_1$ is the truth (or has the same nullsets as the truth). Then agent~2 is overly cautious, effectively purchasing insurance against outcomes with zero probability. Or $P_2$ is the truth. Then agent~1 is running a very risky strategy, that will bankrupt him with positive probability. The bubble is then precisely the value coming from dividend payments in scenarios that agent~1 ignores.}

The following results are immediate consequences of the theorem. The first one states that each agent will invest in such a way that his portfolio is free from bubbles, from the perspective of his own beliefs. This is as it should be, since the agent would otherwise be throwing away wealth by partially investing in a suicide strategy. The second corollary shows that fewer nullsets means larger perceived bubbles. Finally, the third corollary gives an expression for the equilibrium bubbles on the traded assets.

\begin{corollary}
Each equilibrium wealth process $W_k$ has no bubble component when viewed either under~$\P$ or under~$\P_k$.
\end{corollary}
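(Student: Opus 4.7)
The plan is to identify $W_k$ with the market price of a ``security'' whose dividend stream is the equilibrium consumption plan $c_k$, and then show that the two fundamental values $F^k_t(c_k)$ and $F_t(c_k)$ coincide with $W_{kt}$. Bubble-freeness under $\P_k$ is essentially immediate from Proposition~\ref{P:opt}: that proposition yields the identity
\[
W_{kt} = \frac{1}{\xi_{kt}} \E^k_t\!\left[ \int_t^T \xi_{ks} c_{ks} \d s\right] = F^k_t(c_k), \qquad 0\le t<\tau_k,\ \P\text{-a.s.,}
\]
so the $\P_k$-bubble $W_{kt} - F^k_t(c_k)$ vanishes strictly before $\tau_k$. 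Via~\eqref{eq:locequiv} this is the same as saying the bubble vanishes on all of $[0,T]$, $\P_k$-a.s., which is what is required.

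For the statement under $\P$, I would apply Theorem~\ref{T:BFV} to the consumption plan $c_k$, giving
\[
F_t(c_k) = F^k_t(c_k) + \frac{1}{\xi_t} \E_t\!\left[ \int_{\tau_k}^T \xi_s c_{ks} \d s\right], \qquad 0\le t<\tau_k,\ \P\text{-a.s.}
\]
The key input is then the No~Resurrection property (Proposition~\ref{P:nores}(i)), which asserts $c_{ks}=0$ $\P\otimes\d s$-a.e.\ on $[\tau_k,T]$; consequently the correction term is zero, and combining with the first display gives $F_t(c_k) = W_{kt}$ on $\{t<\tau_k\}$.

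It remains to deal with the region $\{t\ge \tau_k\}$, where neither Proposition~\ref{P:opt} nor Theorem~\ref{T:BFV} applies directly. Here I would invoke No~Resurrection again: $W_{kt}=0$ and $c_{ks}=0$ for $s\ge\tau_k$ $\P$-a.s., so both sides of the desired identity $W_{kt}=F_t(c_k)$ are zero. Putting the two ranges together yields $W_{kt}=F_t(c_k)$ for all $t\in[0,T]$, $\P$-a.s., which is the absence of a bubble under $\P$.

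The only mildly delicate point is the bookkeeping between the ``$\P_k$-a.s.'' statements produced by Proposition~\ref{P:opt} and the ``$\P$-a.s.'' statements needed here; this is handled routinely by the local equivalence~\eqref{eq:locequiv} on $\{t<\tau_k\}$. Everything else is a direct substitution into Theorem~\ref{T:BFV} using No~Resurrection to kill the residual integral over $[\tau_k,T]$.
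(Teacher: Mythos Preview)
Your proof is correct and matches the paper's intended argument: the corollary is stated there as an immediate consequence of Theorem~\ref{T:BFV}, and you have spelled out exactly that route—use~\eqref{eq:WkPk} for the $\P_k$-statement, then combine Theorem~\ref{T:BFV} with No~Resurrection to kill the residual integral and obtain the $\P$-statement. The only remark is that the $\P$-identity $W_{kt}=F_t(c_k)$ also appears directly as~\eqref{eq:W1} in the proof of Proposition~\ref{P:prices}, so one could cite that instead of rederiving it via Theorem~\ref{T:BFV}; but your path is precisely the one the paper signposts.
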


\begin{corollary}
If $\tau_k\le\tau_\ell$ (i.e.~if $\P_k\ll \P_\ell$), then $B^k_{it}\ge B^\ell_{it}$ for $0\le t<\tau_k$, $\P$-a.s.
\end{corollary}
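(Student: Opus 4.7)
The proposed inequality is really a statement about fundamental values. Because the market price $S_{it}$ is an observed equilibrium quantity and enters the definitions of both $B^k_{it}$ and $B^\ell_{it}$, we have
\[
B^k_{it} - B^\ell_{it} = F^\ell_t(D_i) - F^k_t(D_i),
\]
so it suffices to prove $F^k_t(D_i) \le F^\ell_t(D_i)$ on $\{t < \tau_k\}$, $\P$-a.s.

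The plan is to apply Theorem~\ref{T:BFV} twice. The hypothesis $\tau_k \le \tau_\ell$ ensures that whenever $t < \tau_k$ we also have $t < \tau_\ell$, so both decompositions
\begin{align*}
F_t(D_i) &= F^k_t(D_i) + \frac{1}{\xi_t}\, \E_t\Big[\int_{\tau_k}^T \xi_s D_{is}\,\d s\Big], \\
F_t(D_i) &= F^\ell_t(D_i) + \frac{1}{\xi_t}\, \E_t\Big[\int_{\tau_\ell}^T \xi_s D_{is}\,\d s\Big]
\end{align*}
are simultaneously valid at~$t$. Subtracting the second from the first yields
\[
F^\ell_t(D_i) - F^k_t(D_i) = \frac{1}{\xi_t}\, \E_t\Big[\int_{\tau_k}^{\tau_\ell} \xi_s D_{is}\,\d s\Big],
\]
which is nonnegative because $\tau_k \le \tau_\ell$ by assumption, $\xi_s$ is strictly positive, and $D_{is}$ is nonnegative (the dividends are strictly positive by the discussion following~\eqref{eq:int_e}). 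This establishes the desired inequality.

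There is no genuine obstacle: all the analytical content has already been absorbed into Theorem~\ref{T:BFV}. The hypothesis $\tau_k \le \tau_\ell$ is used in two places only, both trivial: it makes $F^\ell_t(D_i)$ well-defined on $\{t < \tau_k\}$, and it fixes the sign of the ``extra'' cash flow $\int_{\tau_k}^{\tau_\ell}\xi_s D_{is}\,\d s$. Conceptually, the corollary simply records that an agent whose beliefs have more nullsets discards a larger portion of the future dividend stream from his net-present-value calculation, and therefore attributes a larger bubble component to the observed market price.
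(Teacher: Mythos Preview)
Your argument is correct and is precisely the immediate consequence of Theorem~\ref{T:BFV} that the paper has in mind: apply the decomposition to both agents, subtract, and observe that the resulting integral $\frac{1}{\xi_t}\E_t\big[\int_{\tau_k}^{\tau_\ell}\xi_s D_{is}\,\d s\big]$ is nonnegative. There is nothing to add.
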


\begin{corollary} \label{C:bubble}
The bubbles on the individual stocks, the market portfolio, and the riskless asset, as perceived by the $k$:th agent, are given by
\[
B^k_{it} = \frac{1}{\xi_t} \E_t\left[ \int_{\tau_k}^T \xi_s D_{is} \d s \right], \qquad \overline B^k_t = \frac{1}{\xi_t} \E_t\left[ \int_{\tau_k}^T \xi_s D_s \d s \right],
\]
\[
B^k_{0t} = S_{0t}\left(1 -  \E^k_t \left[\frac{\xi_{kT}S_{0T}}{\xi_{kt}S_{0t}} \right] \right).
\]
\end{corollary}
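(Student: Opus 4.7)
The plan is to derive all three formulas as direct consequences of Theorem~\ref{T:BFV} combined with Proposition~\ref{P:prices} and the definitions of the bubbles. The unifying observation is that the equilibrium market price of each dividend-paying asset coincides with its fundamental value computed under the reference measure~$\P$, so the agent-$k$ bubble equals precisely the residual tail term $\xi_t^{-1}\E_t[\int_{\tau_k}^T \xi_s c_s\,\d s]$ isolated in the theorem.

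For the individual stocks, I would start from Proposition~\ref{P:prices}, which gives $S_{it} = \xi_t^{-1}\E_t[\int_t^T \xi_s D_{is}\,\d s] = F_t(D_i)$. Then by definition $B^k_{it} = S_{it}-F^k_t(D_i) = F_t(D_i)-F^k_t(D_i)$, and applying the decomposition of Theorem~\ref{T:BFV} with $c = D_i$ yields the stated expression on $\{t<\tau_k\}$ (on $\{t\ge \tau_k\}$ the quantity $F^k_t$ is not defined, which is consistent with the scope of the statement). The market-portfolio formula follows identically, summing over $i$ or equivalently applying Theorem~\ref{T:BFV} to $c=D$ and using~\eqref{eq:Sbar1}.

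For the riskless asset the cash flow is not a dividend stream but a terminal payoff $S_{0T}$, so Theorem~\ref{T:BFV} does not apply directly; here I would argue from first principles. By the definition given just before the corollary statement, $B^k_{0t} = S_{0t} - F^k_{0t}$ with $F^k_{0t} = \xi_{kt}^{-1}\E^k_t[\xi_{kT}S_{0T}]$. Since $\xi_{kt}$ and $S_{0t}$ are both $\Fcal_t$-measurable and positive, I can pull the factor $1/(\xi_{kt}S_{0t})$ inside the conditional expectation and factor $S_{0t}$ out to get
\[
B^k_{0t} = S_{0t}\left(1 - \E^k_t\!\left[\frac{\xi_{kT}S_{0T}}{\xi_{kt}S_{0t}}\right]\right),
\]
as claimed.

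There is no real obstacle in this corollary beyond bookkeeping: the substantive analytic content is already packaged in Theorem~\ref{T:BFV} and in the equilibrium pricing formulas of Proposition~\ref{P:prices}. The only subtlety worth flagging is that the identities for $B^k_{it}$ and $\overline B^k_t$ are asserted on $\{t<\tau_k\}$ where the $\P_k$-fundamental values are defined, and that the equality is $\P$-a.s.\ rather than $\P_k$-a.s.; this is exactly the form in which Theorem~\ref{T:BFV} is stated, so no additional work using the local equivalence~\eqref{eq:locequiv} is needed.
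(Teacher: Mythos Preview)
Your proposal is correct and matches the paper's intended approach: the paper presents this corollary as an immediate consequence of Theorem~\ref{T:BFV} together with the equilibrium pricing formulas, exactly as you have written it out. The riskless-asset formula is indeed just algebraic rearrangement of the definition, requiring no appeal to the theorem.
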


We emphasize that heterogeneous beliefs need not induce bubbles. Rather, bubbles appear when there is disagreement about nullsets. Moreover, while agents can disagree about the size of a bubble, they can still agree that a bubble is present (unless there is only one agent left in the economy): this happens when $\P_k(\tau_k\le T)>0$ for all~$k$, which is of course in no way incompatible with Assumption~\ref{A:max}. Examples of this kind will be discussed below.

\begin{example} \label{ex:bubblelog}
Consider again the setting in Example~\ref{ex:log} with logarithmic investors, and assume that there are only $K=2$ agents in the economy. Corollary~\ref{C:bubble} then shows that the subjective bubbles on the market portfolio are given by
\[
\overline B^k_t = \frac{1}{\xi_t} \E_t\left[ \int_{\tau_k}^T \frac{e^{-\rho s}}{\eta(0)}\left( w_1 Z_{1s} + w_2 Z_{2s}\right) \d s\right] = \frac{w_\ell}{\xi_t\eta(0)} \E_t \left[ \int_{\tau_k}^T e^{-\rho s} Z_{\ell s} \d s\right], \quad t<\tau_k,
\]
where $\ell\in\{1,2\}$, $\ell\ne k$. For the second equality we used that $Z_{ks}=0$ for $s\ge\tau_k$.
\end{example}

\subsection{No Arbitrage conditions and equivalent martingale measures} \label{S:EMM}

We now comment briefly on the question of equivalent martingale measures. In the present setting, an \emph{equivalent local martingale measure (ELMM)} for agent~$k$ is a probability measure $\Q \sim \P_k$ such that the discounted cum-dividend stock price processes,
\[
\frac{S_{it}}{S_{0t}} + \int_0^t \frac{D_{is}}{S_{0s}}\d s, \qquad 0\le t\le T,
\]
are local martingales with respect to~$\Q$. If these processes become true martingales, $\Q$ is called an \emph{equivalent martingale measure (EMM)}. The existence of ELMMs and/or EMMs are closely related to various conditions of no-arbitrage type: existence of an ELMM is equivalent to the condition NFLVR (``No Free Lunch With Vanishing Risk'', see~\citet{Delbaen/Schachermayer:1994,Delbaen/Schachermayer:1998}), while existence of an EMM is equivalent to NFLVR together with Merton's No~Dominance condition (see \citet{Loewenstein/Willard:2000} and \citet{Jarrow:2012fk} for a discussion of this fact and connections to market efficiency.) In the complete market case, it is well understood that the existence of asset pricing bubbles on the stock prices is consistent with NFLVR, but not with No~Dominance.

The situation regarding bubbles on the riskless asset is different. In the complete market setting, the only candidate density process for an ELMM is the local martingale $\xi_{kt}S_{0t}$. Hence, in view of Corollary~\ref{C:bubble}, the presence of a nonzero bubble on the riskless asset precludes $\xi_{kt}S_{0t}$ from being a true martingale, and thus from being the density process of an equivalent probability measure. It follows that no ELMM can exist, and that NFLVR fails. Nonetheless, the market can be in equilibrium. However, since a state price density~$\xi_{kt}$ exists, the weaker condition of ``No Arbitrage of the First Kind'', or ``No Unbounded Profit with Bounded Risk'' are satisfied, see~\citep{Karatzas/Kardaras:2007}. This condition, without which utility maximization cannot be done, is thus also necessary and sometimes sufficient for equilibrium to be possible.

\subsection{Historical probability measures}
From an econometric point of view one would like to single out a particular probability measure $\P^*$ as the \emph{historical} or \emph{objective} probability measure, which at least in principle can be identified via statistical methods using a sufficiently long time series of observations of the underlying economic variables. In our setting the choice of $\P^*$ becomes particularly delicate, as the choice of nullsets is ambiguous. Indeed, the meaning of \eqref{eq:locequiv} is precisely that prior to $\tau_k$, no statistical method can decide whether $\{\tau_k\le T\}$ is a nullset or not.

For this reason, the model can be interpreted in different ways: $\P$ could be taken as the historical measure, but any of the $\P_k$ would also be a valid choice. Depending on which choice is made, different agents will be ``wrong'' in their beliefs. Once $\P^*$ has been chosen, it also becomes possible to classify bubbles as ``illusory'' or ``real''. For example, in a two-agent situation where $\P_1\ll\P_2$, agent~1 perceives a bubble while agent~2 does not. If we take $\P^*=\P_2$, then the bubble perceived by agent~1 is fictitious in the sense that the (limited) arbitrage he perceives is illusory and arises from the failure to account for certain catastrophic scenarios. If, on the other hand, we take $\P^*=\P_1$, then there is indeed limited arbitrage, this time caused by the unnecessarily cautious behavior of agent~2, who hedges against scenarios that will never occur. Illusory arbitrage in the context of performance evaluation is discussed in~\citep{Jarrow/Protter:2013}.

\section{Examples} \label{S:ex}

In this section we give several examples of economies where agents disagree about nullsets, and asset price bubbles are present. We also discuss the nature of the trading strategies that lead to bankruptcy of an agent~$k$ on the event $\{\tau_k\le T\}$. In order to make the examples as explicit as possible, we always consider two agents with logarithmic utilities $u_1(x)=u_2(x)=\log(x)$.

\subsection{One risky asset, the first agent optimistic} \label{S:ex1}
Let us consider an economy with one risky asset, where the first agent is optimistic about the future dividend stream of the asset, while the second agent has ``neutral'' views. The dividend process follows a geometric Brownian motion,
\[
D_t = D_0 \exp\left( vX_t  - \frac{v^2}{2} t \right),
\]
where $X$ is Brownian motion under $\P$, and $D_0>1$, $v>0$. Let $\tau_1$ be the first time $D_t$ hits one,
\begin{equation}\label{eq:ex1tau1}
\tau_1 = \inf\{ t \in [0,T] : D_t = 1\},
\end{equation}
and define
\[
Z_{1t} = \frac{D_{t\wedge\tau_1} - 1}{D_0-1}.
\]
This is a nonnegative martingale starting at $Z_{10}=1$, so we may define the beliefs of agent~1 by $\d\P_1 = Z_{1T}\d\P$. The beliefs of agent~2 are given by $\P_2=\P$. This gives an economy and a beliefs structure that fits into the general framework developed above, where $\P_1\ll \P_2$ holds but equivalence fails. The interpretation of this choice of $\P_1$ is that agent~1 is optimistic in the sense that \emph{he does not believe the dividend process can fall below one}. In equilibrium, therefore, we expect agent~1 to attempt to exploit what he perceives as the unnecessarily cautious behavior of agent~2. In view of Example~\ref{ex:log}, the equilibrium quantities are easily computed:

\begin{proposition} \label{P:ex1}
The equilibrium market prices of risk for the two agents are given by
\[
\theta_{1t} = \theta_t + \frac{vD_t}{D_t-1}, \qquad \theta_{2t} = \theta_t,
\]
where
\begin{equation} \label{eq:MPRex1}
\theta_t = v - \1{t<\tau_1} \frac{vD_t}{D_t - 1 + \frac{w_2}{w_1}(D_0-1)}.
\end{equation}
The equilibrium interest rate is given by
\[
r_t = \rho - v\theta_t,
\]
and the stock price is
\[
S_t = D_t \eta(t),
\]
with $\eta(t)$ as in~\eqref{eq:eta}. In particular, the stock price volatility is $v>0$, so that the market is complete.
\end{proposition}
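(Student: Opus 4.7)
The strategy is to specialize the formulas of Example~\ref{ex:log} to this one-stock, two-agent economy with logarithmic utility. Itô's formula applied to $D_t=D_0\exp(vX_t-v^2t/2)$ gives $dD_t=vD_t\,dX_t$, so in the notation of~\eqref{eq:D} we have $a_t\equiv 0$ and $v_t\equiv v$. In view of~\eqref{eq:rthetalog} and~\eqref{eq:thetaklog}, the task reduces to computing the processes $\gamma$ and $\gamma_1$, after which $\theta_t=v-\gamma_t$, $r_t=\rho-v\theta_t$, and $\theta_{1t}=\theta_t+\gamma_{1t}$ all follow. Because $\P_2=\P$, we have $Z_{2t}\equiv 1$, $\gamma_{2t}\equiv 0$, and $\tau_2=\infty$, giving $\theta_{2t}=\theta_t$ as claimed.

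To find $\gamma_{1t}$: on $[0,\tau_1)$ we have $Z_{1t}=(D_t-1)/(D_0-1)$, so $dZ_{1t}=\tfrac{vD_t}{D_0-1}\,dX_t = Z_{1t}\,\tfrac{vD_t}{D_t-1}\,dX_t$. Matching this against~\eqref{eq:gammaklog} yields $\gamma_{1t}=vD_t/(D_t-1)$ on $\{t<\tau_1\}$, which gives the proposition's expression for $\theta_{1t}$. The blow-up condition~\eqref{eq:gammakintlog} is automatic and requires no separate calculation: if $\int_0^{\tau_1}\gamma_{1s}^2\,ds$ were finite on $\{\tau_1\le T\}$, then $\log Z_{1t}=\int_0^t\gamma_{1s}\,dX_s-\tfrac12\int_0^t\gamma_{1s}^2\,ds$ would admit a finite limit at $\tau_1$, contradicting $Z_{1,\tau_1}=0$.

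To find $\gamma_t$: on $[0,\tau_1)$ the positive martingale $w_1Z_{1t}+w_2Z_{2t}$ equals $w_1(D_t-1)/(D_0-1)+w_2$ with differential $\tfrac{w_1vD_t}{D_0-1}\,dX_t$. Equating $d(w_1 Z_{1t}+w_2)=(w_1Z_{1t}+w_2)\gamma_t\,dX_t$, as required by~\eqref{eq:gammalog}, and simplifying yields $\gamma_t=vD_t/(D_t-1+(w_2/w_1)(D_0-1))$ on $\{t<\tau_1\}$. On $[\tau_1,T]$ the martingale equals the constant $w_2$, so $\gamma_t=0$. Hence $\theta_t=v-\gamma_t$ coincides with~\eqref{eq:MPRex1}, and $r_t=\rho-v\theta_t$ holds as stated.

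Finally, since there is a single stock, $S_t=\overline S_t=D_t\eta(t)$ by~\eqref{eq:Sbarlog}. Applying Itô to this product yields diffusion coefficient $vD_t\eta(t)=vS_t$, so the volatility $\sigma_t$ equals the constant $v>0$; in particular $\sigma_t$ is invertible $\P\otimes dt$-a.e., and the market is complete. The only step that is not routine bookkeeping is the verification of~\eqref{eq:gammakintlog}, which I expect to be the main (and ultimately mild) obstacle, handled by the general martingale argument above rather than any detailed analysis of the geometric Brownian motion near its hitting time of one.
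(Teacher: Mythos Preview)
Your proof is correct and follows exactly the approach indicated in the paper: specialize Example~\ref{ex:log} by identifying $a_t\equiv0$, $v_t\equiv v$, and then compute $\gamma$ and $\gamma_1$ via It\^o's formula applied to $w_1Z_{1t}+w_2$ and $Z_{1t}$. The verification of~\eqref{eq:gammakintlog} that you single out is in fact already part of the general statement of Example~\ref{ex:log} (it holds for any nonnegative martingale hitting zero) and is not needed separately for this proposition.
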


Note that the ``reference'' market price of risk $\theta_t$ is bounded, which means that the same is true for the second agent's subjective market price of risk $\theta_{2t}$. However, the first (optimistic) agent's market price of risk explodes to $+\infty$ as~$t$ increases to~$\tau_1$. As will be shown below, this has consequences for the investment behavior of agent~1 close to~$\tau_1$.

Next, the sign of $\theta_{2t}$ prior to $\tau_1$ depends on the relative initial wealth of the two agents. Indeed, it follows from~\eqref{eq:MPRex1} that for $t<\tau_1$, $\theta_{2t}>0$ holds if and only if $w_1/w_2 < D_0-1$. Thus, if the optimistic agent dominates the economy from the outset, then the growth potential of the stock price becomes so strong that agent~2 is willing to sustain \emph{negative compensation} (in effect, to pay) for investing in the stock. On the other hand, the interest rate moves in the opposite direction, meaning that investors require high yields in order to put money in the riskless asset. If on the other hand we have $w_1/w_2 > D_0-1$, then $\theta_{2t}<0$.

From the point of view of the optimistic agent~1, things look different. His subjective market price of risk $\theta_{1t}$ is always positive, regardless of the initial wealth distribution. Moreover, agent~1 will view the equilibrium interest rate as too high. To see this, we apply It\^o's formula to derive the dynamics of the state price density of agent~1, given by $\xi_{1t}=\xi_1/Z_{1t}$. The result is
\[
\frac{\d \xi_{1t}}{\xi_{1t}} =  - \int_0^t \left( \rho - v\theta_{1s} \right) \d s + \int_0^t \theta_{1s}^\top \d X_s.
\]
Hence, from the optimistic agent's perspective, the correct interest rate should be $\rho-v\theta_{1t}$. This differs from the interest rate that actually prevails in equilibrium, which satisfies
\[
r_t = \rho - v\theta_t = (\rho - v\theta_{1t})  - \frac{v^2 D_t}{D_t-1}.
\]
The perceived mis-specification of the interest rate explodes as $t$ increases to $\tau_1$, which we interpret as a statement about the relative size of the bubble on the stock and the bubble on the riskless asset, as perceived by agent~1: the bubble on the riskless asset is larger in relative terms. This intuition is supported by the following result, which deals with equilibrium trading strategies.

\begin{proposition} \label{P:ex1_2}
The equilibrium strategy of the $k$:th agent ($k=1,2$) is given by
\[
\left(\!
\begin{array}{c}
\phi_{kt}\\
\pi_{kt}
\end{array}
\!\right)
=
W_{kt} \left(\!
\begin{array}{c}
1-\theta_{kt} / v\\
\theta_{kt} / v
\end{array}
\!\right),
\qquad 0\le t<\tau_k.
\]
Moreover, we have, $\P$-a.s.,
\[
\lim_{t\uparrow\tau_1} \pi_{1t} = - \lim_{t\uparrow\tau_1} \phi_{1t} = \frac{w_1}{w_2}(D_0-1)^{-1}S_{\tau_1} \quad \text{on}\quad \{\tau_1\le T\}.
\]
\end{proposition}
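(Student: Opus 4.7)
The plan is to derive the strategy formula from the standard logarithmic-investor calculation, and then pass to the limit using the explicit form of $Z_1$. From Example~\ref{ex:log} we have $W_{kt}=c_{kt}\eta(t)$ with $c_{kt}=w_k/(e^{\rho t}\xi_{kt}\eta(0))$, and, by Bayes' rule together with the definition $\xi_{kt}=\xi_t/Z_{kt}$, the $k$-th agent's state price density obeys $\d\xi_{kt}/\xi_{kt}=-r_t\,\d t-\theta_{kt}\,\d X^k_t$ under $\P_k$, where $X^k$ is the $\P_k$-Brownian motion. Applying It\^o to $W_{kt}=\eta(t)c_{kt}$ and using $\eta'(t)=\rho\eta(t)-1$, the identity $\rho c_{kt}\eta(t)=\rho W_{kt}$ makes the $\rho$ terms cancel and yields
\[
\d W_{kt}=\bigl(W_{kt}r_t+W_{kt}\theta_{kt}^{2}-c_{kt}\bigr)\,\d t+W_{kt}\theta_{kt}\,\d X^k_t.
\]
Matching this against the $\P_k$-self-financing equation $\d W_{kt}=(W_{kt}r_t+v\pi_{kt}\theta_{kt}-c_{kt})\,\d t+v\pi_{kt}\,\d X^k_t$ (valid because $\sigma_t\equiv v$ and $\mu_{kt}=r_t+v\theta_{kt}$) gives $\pi_{kt}=W_{kt}\theta_{kt}/v$, and $\phi_{kt}=W_{kt}-\pi_{kt}$ follows from the definition of wealth. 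This yields the first display.

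For the limits as $t\uparrow\tau_1$, I would use the explicit expressions for $Z_1$ and $\xi$. Since $\d D_t=vD_t\,\d X_t$ and $Z_{1t}=(D_t-1)/(D_0-1)$ for $t<\tau_1$, the process $\gamma_1$ in~\eqref{eq:gammaklog} equals $\gamma_{1t}=vD_t/(D_t-1)$, so $\theta_{1t}/v=\theta_t/v+D_t/(D_t-1)\sim(D_t-1)^{-1}$ as $D_t\downarrow 1$. Plugging $Z_{1t}=(D_t-1)/(D_0-1)$ and $Z_{2t}\equiv 1$ into~\eqref{eq:xi_log} and then into $c_{1t}=w_1/(e^{\rho t}\xi_{1t}\eta(0))$ gives, after a short simplification,
\[
W_{1t}=c_{1t}\eta(t)=\frac{w_1 D_t(D_t-1)\eta(t)}{w_1(D_t-1)+w_2(D_0-1)}\;\sim\;\frac{w_1\eta(\tau_1)}{w_2(D_0-1)}\,(D_t-1).
\]
The factor $D_t-1$ cancels in the product $W_{1t}\cdot\theta_{1t}/v$; using $S_{\tau_1}=D_{\tau_1}\eta(\tau_1)=\eta(\tau_1)$, the limit of $\pi_{1t}$ is $w_1 S_{\tau_1}/[w_2(D_0-1)]$, and since $W_{1t}\to 0$ we also get $\lim\phi_{1t}=-\lim\pi_{1t}$.

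The only obstacle is bookkeeping: one must verify that $W_{1t}$ vanishes linearly in $D_t-1$ while $\theta_{1t}$ diverges at exactly that rate, so the two effects balance into a finite nonzero limit. Conceptually the result is just the Merton log allocation $\pi_{kt}/W_{kt}=\theta_{kt}/v$, evaluated along the sequence on which agent~1's subjective market price of risk explodes, with the bankruptcy phenomenon of Corollary~\ref{C:bankrupt} forcing $W_{1t}\to 0$ even though the dollar positions $\phi_{1t}$ and $\pi_{1t}$ themselves approach a well-defined nontrivial limit.
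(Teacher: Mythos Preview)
Your proof is correct and follows essentially the same approach as the paper: both derive the strategy from It\^o's formula applied to $W_{kt}=c_{kt}\eta(t)$ together with $\sigma_t=v$, and both obtain the limit by balancing the blow-up of $\theta_{1t}$ against the linear vanishing of $W_{1t}$ in $D_t-1$. The only cosmetic difference is that you compute $W_{1t}$ directly in closed form from~\eqref{eq:xi_log}, whereas the paper passes through the ratio identity $W_{1t}/S_t = w_1Z_{1t}/(w_1Z_{1t}+w_2Z_{2t})$ and rearranges to isolate $W_{1t}/(D_t-1)$; the computations are equivalent.
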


This result clearly shows how agent~1 attempts to exploit the perceived bubbles. Both the stock and the riskless asset carry bubbles, so ideally agent~1 would like to sell both these assets short. However, due to the admissibility requirement of maintaining nonnegative wealth such a strategy is infeasible. Instead, agent~1 short sells one of the assets (in this example, the riskless assets), while maintaining a long position in the other asset (the stock) in order to guarantee admissibility. Such collateralized trades are discussed in detail in \citep[Section~3]{Hugonnier:2012}. The fact that he goes short in the riskless asset and long in the stock confirms the previous intuition that the riskless asset has a larger bubble. It is also consistent with the interpretation of agent~1 as being optimistic about~$D_t$. In fact, Proposition~\ref{P:ex1_2} leads us to refine our interpretation of the beliefs of agent~1: he is optimistic about the performance of the stock dividends, \emph{relative to the performance of the riskless asset}.

Proposition~\ref{P:ex1_2} also shows that while the \emph{proportions} of wealth invested in the two assets explode as~$t$ increases to~$\tau_1$, this behavior is caused by the vanishing denominator. The numerators, i.e.~the \emph{amounts} held in the two assets, are well-behaved in the sense that they converge to something finite. The net value of this limiting portfolio is zero. However, at every instant strictly prior to~$\tau_1$, agent~1 is convinced that his highly levered position will ultimately result in a profit, and for this reason he continues to trade until his wealth reaches zero. The offsetting positions he then holds in the stock (long) and risk-free asset (short) amount to a loan from agent~2 that was used to buy shares of the stock. At $\tau_1$ agent~1 is forced to liquidate this position, effectively handing over his stock holdings to agent~2, thereby closing out the loan.

Finally, let us comment on the question of equivalent martingale measures and absence of arbitrage. Since we clearly have positive bubbles under~$\P_1$, the discussion in Section~\ref{S:EMM} implies that no equivalent local martingale measure can exist relative to~$\P_1$. On the other hand, the market price of risk of agent~2 is bounded, which implies that an equivalent \emph{true} martingale measure exists relative to~$\P_2$.

\subsection{One risky asset, the first agent pessimistic} \label{S:ex2}
The previous example can easily be modified to yield a situation where agent~1 is \emph{pessimistic}: Take $D_0<1$, let $\tau_1$ be defined by~\eqref{eq:ex1tau1}, and set
\[
Z_{1t}= \frac{1-D_{t\wedge\tau_1}}{1-D_0}
\]
as before. Endowing agent~1 with beliefs $\P_1$ given by $\d\P_1=Z_{1T}\d\P$, he will then assign zero probability to the event that the dividend process \emph{rise above one}. Propositions~\ref{P:ex1} and~\ref{P:ex1_2} still hold verbatim; however, $D_0-1$ is now negative rather than positive, as is $D_t-1$ for $t<\tau_1$. Consequently, the market price of risk of the pessimistic agent~1 will now explode to $-\infty$ as $t$ increases to~$\tau_1$. Moreover, writing the expressions for the limiting optimal amounts of agent~1 (see Proposition~\ref{P:ex1_2}) as
\[
\lim_{t\uparrow\tau_1} \pi_{1t} = - \lim_{t\uparrow\tau_1} \phi_{1t} = - \frac{w_1}{w_2}(1-D_0)^{-1}S_{\tau_1} \quad \text{on}\quad \{\tau_1\le T\},
\]
we see that the pessimistic agent will attempt to exploit the (perceived) bubbles by short selling the stock rather than the riskless asset. This is again consistent with our interpretation of the beliefs of the first agent as \emph{pessimism regarding the performance of the stock dividends relative to the riskless asset}.

\subsection{One risky asset, both agents perceive bubbles} \label{S:ex3}
In this example we consider an economy with one risky asset and two agents, where the beliefs structure is such that both agents simultaneously see a bubble. This is achieved by letting agent~1 be optimistic as before, and assuming that agent~2 operates under the belief that large downward movements of the dividend process are impossible. These ``large downward movements'' will be quantified via \emph{relative drawdowns}. An interesting additional feature of the resulting economy is that the subjective bubble perceived initially by agent~1 may burst before agent~2 goes bankrupt.

Now to the specifics. We let $X$ be Brownian motion under~$\P$, $v>0$, $D_0>1$, and define
\[
D_t = D_0 \exp\left( vX_t  - \frac{v^2}{2} t \right),
\]
as well as
\[
\tau_1 = \inf\{ t \in [0,T] : D_t = 1\}, \qquad Z_{1t} = \frac{D_{t\wedge\tau_1} - 1}{D_0-1}, \qquad \d\P_1=Z_{1T}\d\P.
\]
The beliefs of agent~2 are given as follows. The \emph{relative drawdown} of the dividend process is defined by
\[
{\rm rDD}_t= 1 - \frac{D_t}{D_t^*}, \quad \text{where} \quad D^*_t = \max_{0\le s\le t} D_s.
\]
Fix a constant $\kappa\in(0,1)$ and let $\tau_2$ be the first time the relative drawdown reaches~$1-\kappa$. Equivalently, this is the first time the dividend process becomes a fraction $\kappa$ of the level of its running maximum. That is, we have
\begin{equation} \label{eq:ex22_00}
\tau_2 = \inf\{ t\in[0,T]: {\rm rDD}_t = 1-\kappa\} = \inf\{ t\in[0,T]: D_t = \kappa D^*_t \}.
\end{equation}
The beliefs of agent~2 are now given via the corresponding density process.

\begin{lemma} \label{L:ex_22}
Define a process $Z_2$ by
\[
Z_{2t} = \frac{D_{t\wedge\tau_2} - \kappa D^*_{t\wedge\tau_2}}{(1-\kappa)D_0} \left( \frac{D^*_{t\wedge\tau_2}}{D_0}\right)^{\frac{\kappa}{1-\kappa}}.
\]
Then we have
\begin{equation}\label{eq:ex22_22}
\tau_2 = \inf\{t\in[0,T] : Z_{2t}=0\}.
\end{equation}
Moreover, $Z_2$ is a martingale on $[0,T]$ and satisfies
\[
Z_{2t} = 1 + \int_0^t Z_{2s} \1{s<\tau_2} \frac{\d D_s}{D_s - \kappa D^*_s}.
\]
\end{lemma}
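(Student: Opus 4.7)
\emph{Part (1).} I would read off the hitting characterization directly from the explicit form of $Z_{2t}$. For $t<\tau_2$, the definition of $\tau_2$ gives $D_t > \kappa D^*_t$, while $D^*_t \ge D_0 > 0$, so both factors defining $Z_{2t}$ are strictly positive. At $t=\tau_2$ the first factor $D_{\tau_2} - \kappa D^*_{\tau_2}$ vanishes, and for $t>\tau_2$ the stopping freezes $Z_{2t}$ at $0$. This immediately yields \eqref{eq:ex22_22}.

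\emph{Part (3).} I would apply It\^o's formula to the function
\[
f(x,y) = \frac{(x-\kappa y)\, y^{\kappa/(1-\kappa)}}{(1-\kappa)\, D_0^{1/(1-\kappa)}},
\]
evaluated at $(D_{t\wedge\tau_2}, D^*_{t\wedge\tau_2})$. A direct computation yields $\partial_{xx}f\equiv 0$, $\partial_x f(x,y) = y^{\kappa/(1-\kappa)}/[(1-\kappa)D_0^{1/(1-\kappa)}]$, and the key identity
\[
\partial_y f(x,y)\big|_{x=y} = \frac{y^{\kappa/(1-\kappa)}}{(1-\kappa)D_0^{1/(1-\kappa)}}\Big[ -\kappa + (1-\kappa)\cdot\tfrac{\kappa}{1-\kappa}\Big] = 0;
\]
the exponent $\kappa/(1-\kappa)$ is chosen precisely for this cancellation. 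Since $D^*$ has locally finite variation and increases only on $\{D_t = D^*_t\}$, this boundary-vanishing annihilates the $\partial_y f\, dD^*$ contribution to It\^o's formula (the standard Az\'ema--Yor trick), and the $\tfrac{1}{2}\partial_{xx}f\, d\langle D\rangle$ term vanishes trivially. Hence $dZ_{2t} = \partial_x f(D_{t\wedge\tau_2}, D^*_{t\wedge\tau_2})\, dD_{t\wedge\tau_2}$. On $\{t<\tau_2\}$ one has the algebraic identity $Z_{2t} = (D_t - \kappa D^*_t)\,\partial_x f(D_t,D^*_t)$, so $\partial_x f = Z_{2t}/(D_t - \kappa D^*_t)$ there. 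Inserting the indicator $\1{s<\tau_2}$ (which captures that the stopped $Z_2$ is constant past $\tau_2$) yields the stated SDE.

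\emph{Part (2).} The SDE from (3) exhibits $Z_2$ as a nonnegative local martingale, hence a supermartingale; upgrading it to a true martingale requires an integrable envelope. The trivial bound $D_{t\wedge\tau_2}-\kappa D^*_{t\wedge\tau_2} \le (1-\kappa) D^*_{t\wedge\tau_2}$ gives
\[
Z_{2t} \le \left(\frac{D^*_T}{D_0}\right)^{1/(1-\kappa)}.
\]
Since $\log(D^*_T/D_0)\le v\sup_{s\le T} X_s$ and $\sup_{s\le T}X_s$ has half-normal tails by the reflection principle, this dominating variable lies in $L^p(\P)$ for every $p<\infty$. The family $\{Z_{2t}\}_{t\in[0,T]}$ is therefore uniformly integrable, so $Z_2$ is a true martingale. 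The main obstacle in the whole argument is the It\^o computation in Part~(3); once the vanishing identity $\partial_y f|_{x=y}=0$ is verified, everything else is routine.
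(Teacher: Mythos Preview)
Your proof is correct. Parts~(1) and~(2) are essentially the same as the paper's: the hitting characterization is read off directly, and the uniform bound $Z_{2t}\le (D^*_T/D_0)^{1/(1-\kappa)}$ is identical. For integrability of the envelope the paper invokes Doob's $L^p$-inequality applied to the martingale~$D$ (with $p=1/(1-\kappa)$) together with log-normality of~$D_T$, whereas you bound $\log(D^*_T/D_0)\le v\sup_{s\le T}X_s$ and use the reflection principle; both arguments are valid and equally short.

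The genuine difference is in Part~(3). The paper does not apply It\^o's formula to $f(x,y)$ directly. Instead it writes $Z_{2t}=e^{\Lambda_t}Y_t$ with $Y_t=\tfrac{1}{1-\kappa}(1-\kappa-{\rm rDD}_{t\wedge\tau_2})$ and $\Lambda_t=\tfrac{1}{1-\kappa}\log(D^*_{t\wedge\tau_2}/D_0)$, then invokes an identity from \citep{CherNikeghPlaten:2012} for the semimartingale decomposition of the relative drawdown, and finally uses the product rule. The cancellation of the $\d\log D^*$-terms in that computation is the analytic analogue of your boundary identity $\partial_y f|_{x=y}=0$. Your route is more direct and self-contained---it avoids the external citation and makes the Az\'ema--Yor structure (the exponent $\kappa/(1-\kappa)$ being chosen precisely to kill the finite-variation contribution) completely transparent. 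The paper's decomposition, on the other hand, keeps the drawdown interpretation visible throughout. Either approach yields the SDE with the same amount of work.
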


We now set $\d\P_2 = Z_{2T}\d\P$. Under these beliefs, agent~2 will view a relative drawdown of $1-\kappa$ or more as impossible.

At this point we should pause and emphasize that $\P(\max(\tau_1,\tau_2)\le T)>0$ holds, in violation of Assumption~\ref{A:max}. As remarked earlier, this is easily remedied by replacing $\P$ by $\widetilde \P=(\P_1+\P_2)/2$, for instance. Moreover, the specific choice of $\widetilde \P$ does not influence the resulting interest rate, agent-specific market prices of risk, or equilibrium stock prices.  The bubble components perceived by each agent are also independent of the choice of $\widetilde \P$. We may therefore carry out all computations as usual using the original measure $\P$, for all times~$t<\max(\tau_1,\tau_2)$. With this in mind, the equilibrium quantities can now be found.

\begin{proposition} \label{P:ex_22}
The equilibrium market prices of risk for the two agents are given by
\[
\theta_{1t} = \theta_t + \frac{vD_t}{D_t-1}, \qquad \theta_{2t} = \theta_t + \frac{v D_t}{D_t - \kappa D^*_t},
\]
where, for $t<\max(\tau_1,\tau_2)$,
\begin{equation} \label{eq:MPRex1}
\theta_t = v - \frac{v D_t}{w_1Z_{1t}+w_2Z_{2t}}\left(\frac{w_1}{D_0-1}\1{t<\tau_1} + \frac{w_2}{(1-\kappa)D_0}\left(\frac{D^*_t}{D_0}\right)^{\frac{\kappa}{1-\kappa}}\1{t<\tau_2} \right)
\end{equation}
The equilibrium interest rate is given by
\[
r_t = \rho - v\theta_t,
\]
and the stock price is
\[
S_t = D_t \eta(t),
\]
with $\eta(t)$ as in~\eqref{eq:eta}. In particular, the stock price volatility is $v>0$, so that the market is complete. The equilibrium trading strategies are given by the expressions in Proposition~\ref{P:ex1_2}, and we have the following limiting holdings in the stock.
\begin{align}
\label{eq:ex_22_pi1}
\lim_{t\uparrow\tau_1} \pi_{1t} &=  \frac{w_1}{w_2}\, \frac{S_{\tau_1}}{Z_{2\tau_1}}\, \frac{1}{D_0-1} &\text{on}\quad \{\tau_1<\tau_2\}, \quad \P\text{-a.s.}\\
\label{eq:ex_22_pi2}
\lim_{t\uparrow\tau_2} \pi_{2t} &=  \frac{w_2}{w_1}\, \frac{S_{\tau_2}}{Z_{1\tau_2}}\, \frac{\kappa}{1-\kappa} \left(\frac{D_0}{D^*_{\tau_2}}\right)^{\frac{\kappa}{1-\kappa}-1}  &\text{on}\quad \{\tau_2<\tau_1\}, \quad \P\text{-a.s.}
\end{align}
\end{proposition}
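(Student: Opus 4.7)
The strategy is to reduce to Example~\ref{ex:log} by identifying the density volatilities $\gamma_{1t}, \gamma_{2t}$ and then plugging into the explicit log-investor formulas. Since $\d D_t = vD_t\d X_t$, Itô's formula applied to the martingale $Z_{1t}=(D_{t\wedge\tau_1}-1)/(D_0-1)$ gives $\d Z_{1t}=Z_{1t}\1{t<\tau_1}\frac{vD_t}{D_t-1}\d X_t$, so $\gamma_{1t}=\1{t<\tau_1}vD_t/(D_t-1)$. From the SDE in Lemma~\ref{L:ex_22} one reads off $\gamma_{2t}=\1{t<\tau_2}vD_t/(D_t-\kappa D^*_t)$. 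The expressions for $\theta_{1t}$ and $\theta_{2t}$ then follow from~\eqref{eq:thetaklog}.

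For the reference market price of risk, formula~\eqref{eq:gammalog} identifies $\gamma_t = (w_1Z_{1t}\gamma_{1t}+w_2Z_{2t}\gamma_{2t})/(w_1Z_{1t}+w_2Z_{2t})$ by matching the diffusion coefficient of $w_1 Z_{1t}+w_2 Z_{2t}$. The key simplification is that substituting the explicit forms of $Z_1$ and $Z_2$ collapses the products $w_kZ_{kt}\gamma_{kt}$ to $vD_t$ times the coefficients appearing in the stated formula for $\theta_t$, after which $\theta_t = v-\gamma_t$ gives the displayed expression. The interest rate then follows from~\eqref{eq:rthetalog} (using $a_t=0$, $v_t=v$); the stock price is $S_t=\overline S_t=D_t\eta(t)$ by~\eqref{eq:Sbarlog} since $n=1$; and market completeness is immediate from $\sigma_t=v>0$. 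The trading-strategy formula is identical to Proposition~\ref{P:ex1_2}, since its derivation only relies on log utility and a one-stock market.

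For the limiting stock holdings I use the wealth formula $W_{kt}=w_kZ_{kt}S_t/(w_1Z_{1t}+w_2Z_{2t})$, which follows by combining \eqref{eq:cklog}, \eqref{eq:Wklog}, and~\eqref{eq:xi_log}. Then $\pi_{kt}=W_{kt}\theta_{kt}/v=W_{kt}(\theta_t+\gamma_{kt})/v$. On $\{\tau_1<\tau_2\}$, as $t\uparrow\tau_1$, the first piece $W_{1t}\theta_t/v\to 0$ since $W_{1t}\to 0$ and $\theta_t/v$ stays bounded (because $Z_{2\tau_1}>0$). The remaining piece is $W_{1t}\gamma_{1t}/v$, and the crucial algebraic identity $Z_{1t}\gamma_{1t}/v=D_t/(D_0-1)$ (valid for $t<\tau_1$) cancels the vanishing factor $Z_{1t}$. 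Passing to the limit with $D_{\tau_1}=1$ and $Z_{2\tau_1}>0$ yields \eqref{eq:ex_22_pi1}. The analysis of \eqref{eq:ex_22_pi2} proceeds symmetrically on $\{\tau_2<\tau_1\}$, using the corresponding identity $Z_{2t}\gamma_{2t}/v = D_t (D^*_t/D_0)^{\kappa/(1-\kappa)}/((1-\kappa)D_0)$ valid for $t<\tau_2$, together with $D_{\tau_2}=\kappa D^*_{\tau_2}$ and $Z_{1\tau_2}>0$.

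The main obstacle is the $0\cdot\infty$ indeterminacy in the limits of $\pi_{kt}$: individually $W_{kt}\to 0$ and $\theta_{kt}/v\to\infty$ as $t\uparrow\tau_k$. The resolution is that the explicit algebraic structure of $Z_{kt}$ guarantees bounded, continuous limits for the products $Z_{kt}\gamma_{kt}$ at $\tau_k$; this is precisely what keeps the stock holdings finite at bankruptcy and also confirms the interpretation of the limiting portfolio as a long-short position that must be liquidated when the other agent's density vanishes.
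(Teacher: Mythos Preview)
Your proposal is correct and follows essentially the same route as the paper: both invoke Example~\ref{ex:log} to read off $\theta_t$, $\theta_{kt}$, $r_t$, and $S_t$ after identifying the density volatilities $\gamma_{kt}$, and both resolve the $0\cdot\infty$ indeterminacy in $\pi_{kt}=W_{kt}\theta_{kt}/v$ via the algebraic cancellation $Z_{kt}\gamma_{kt}=\text{(finite)}$ combined with the wealth identity $W_{kt}/S_t=w_kZ_{kt}/(w_1Z_{1t}+w_2Z_{2t})$. Your presentation is in fact slightly cleaner, since you work directly with $Z_{kt}\gamma_{kt}$ rather than first isolating $\lim W_{kt}/Z_{kt}$ as the paper does.
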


Qualitatively the behavior is similar to the previous examples: the subjective market prices of risk explode close to the respective bankruptcy time (both to $+\infty$ in this example). The same thing holds for the optimal fractions invested in the stock and in the riskless asset. For each agent~$k$, the optimal fraction held in the riskless asset tends to $-\infty$ close to $\tau_k$, which means that both agents hope to exploit the bubble on the riskless asset. This is consistent with the fact that they both are optimistic about the dividend process: agent~1 thinks it cannot drop below one, while agent~2 believes that a large drop relative to its all-time maximum is impossible.

Let us comment on the subjective stock price bubbles appearing in this example. They are given by
\begin{equation} \label{eq:ex22_44}
B^k_t = \frac{1}{\xi_t}\E_t\left[ \int_{\tau_k}^{T\wedge\max(\tau_1,\tau_2)} \xi_s D_s \d s\right], \quad t<\tau_k, \quad \P_k\text{-a.s.}
\end{equation}
Since we have both $\P(\tau_1<\tau_2)>0$ and $\P(\tau_2<\tau_1)>0$, it follows directly that both agents perceive a nonzero bubble at time~0. Moreover, the bubble perceived by agent~1 will disappear when agent~2 goes bankrupt at time~$\tau_2$. Similarly, the bubble perceived by agent~2 disappears at time~$\tau_1$. However, there is a different, rather undramatic way in which the subjective bubble seen by agent~1 can burst. To see this, consider the stopping time
\[
\sigma = \inf\left\{ t\in[0,T] : D^*_t = \frac{1}{\kappa} \right\}.
\]
If $\sigma$ occurs while both agents are still present in the economy, i.e.~on the event $\{\sigma< \tau_1\wedge\tau_2\}$, we necessarily have $\tau_2\le \tau_1$. Indeed, for $t\ge\sigma$ we have $D_t^*\ge 1/\kappa$, which means that if the dividend process is to fall down to $D_t=1$, it must first pass the level $\kappa D^*_t$, triggering~$\tau_2$. This shows that $\tau_2\le \tau_1$ on $\{\sigma< \tau_1\wedge\tau_2\}$. Consequently, by~\eqref{eq:ex22_44}, we have
\[
B^1_t = 0, \quad t \in [\sigma,T], \quad \P_1\text{-a.s.~on $\{\sigma<\tau_1\wedge\tau_2\}$}.
\]
On the other hand, strictly prior to $\sigma$, both agents see strictly positive bubbles. To summarize, we have shown how a bubble (perceived by one of the agents) may burst at some time strictly prior to all bankruptcy times.

\subsection{Two risky assets, both agents optimistic} \label{S:ex4}
The last example considers an economy with two risky assets, where both agents condition components of both dividend processes to be large. This leads to a situation where they both perceive bubbles on the assets, although they disagree about the size of the bubble at any given moment. Interestingly, their perceptions of the statistical properties of the aggregate bubble may coincide under certain conditions.

We model the aggregate dividend under~$\P$ as a geometric Brownian motion with drift,
\[
D_t = D_0\exp\left\{ v^\top X_t + \Big(a - \frac{1}{2}\|v\|^2 \Big)t \right\},
\]
where $v=(v^1,v^2)\in\R^2$, $a\in\R$, and $X=(X_1,X_2)$ is two-dimensional standard Brownian motion under~$\P$. The individual dividend processes are then modeled as fractions of the aggregate dividend,
\[
D_{it} = \psi_{it}D_t, \qquad i=1,2,
\]
where the fractions $\psi_1$, $\psi_2$ are given by
\[
\psi_{1t} = \psi_{10} + \int_0^t \psi_{1s}(1-\psi_{1s}) v_\psi^\top \d X_s, \qquad \psi_{2t} = 1-\psi_{1t},
\]
for some vector $v_\psi \in \mathbb R^2$ and $\psi_{10}\in (0,1)$. The above stochastic differential equation for $\psi_1$ has a unique strong solution valued in the open unit interval~$(0,1)$, see \citep{Menzly/Santos/Veronesi:2004}. It is therefore guaranteed that the dividend processes are well-defined and strictly positive. In particular, It\^o's formula implies that each $D_i$ satisfies
\begin{equation} \label{eq:ex3_1}
\frac{ \d D_{it}}{D_{it}} = \Big( v + (-1)^{i-1}(1-\psi_{it})v_\psi\Big)^\top \d X_t + \Big( a + (-1)^{i-1}(1-\psi_{it})v^\top v_\psi\Big)\d t.
\end{equation}
To define the two agents' beliefs, we set
\[
\tau_k = \inf\{ t\in [0,T] : X_{kt} = -1\}, \qquad Z_{kt} = 1 + X_{k,\,t\wedge\tau_k}, \qquad k=1,2,
\]
and define $\d \P_1 = Z_{1T}\d \P$ and $\d \P_2 = Z_{2T}\d\P$. An application of Girsanov's theorem gives the following result, which shows how the two agents perceive the dividend processes.

\begin{proposition} \label{P:ex3_1}
The bivariate process
\[
X^1_t =
\left(\!
\begin{array}{c}
X^1_{1t}\\[3mm]
X^1_{2t}
\end{array}
\!\right)
=
\left(\!
\begin{array}{l}
X_{1t} - \int_0^t \frac{1}{1+X_{1s}}\d s\\[3mm]
X_{2t}
\end{array}
\!\right)
\]
is Brownian motion under $\P_1$. Similarly, the bivariate process
\[
X^2_t =
\left(\!
\begin{array}{c}
X^2_{1t}\\[3mm]
X^2_{2t}
\end{array}
\!\right)
=
\left(\!
\begin{array}{l}
X_{1t} \\[3mm]
X_{2t} - \int_0^t \frac{1}{1+X_{2s}}\d s
\end{array}
\!\right)
\]
is Brownian motion under $\P_2$. The dividend processes $D_1$ and $D_2$ satisfy
\begin{align*}
\frac{ \d D_{it}}{D_{it}} &= \Big( v + (-1)^{i-1}(1-\psi_{it})v_\psi\Big)^\top \d X^k_t \\
&\qquad + \Big( a + (-1)^{i-1}(1-\psi_{it})v^\top v_\psi + (v + (-1)^{i-1}(1-\psi_{it})v_\psi)^\top {\bf e}_k \frac{1}{1+X_{kt}} \Big)\d t
\end{align*}
under $\P_k$, where ${\bf e}_k$ is the $k$:th unit vector in $\mathbb R^2$.
\end{proposition}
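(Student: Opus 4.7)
The plan is to apply Girsanov's theorem directly to the density process $Z_k$, and then substitute the resulting drift correction into~\eqref{eq:ex3_1}.

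First, by It\^o's formula applied to $Z_{kt} = 1 + X_{k,t\wedge\tau_k}$, I obtain $\d Z_{kt} = \1{t<\tau_k}\,\d X_{kt}$. Hence $Z_k$ is a nonnegative local $\P$-martingale satisfying $\E[Z_{kt}^2] = 1 + \E[t\wedge\tau_k] \le 1+T$, so it is a genuine martingale on $[0,T]$; this ensures that $\d\P_k = Z_{kT}\,\d\P$ defines a probability measure. Rewriting the SDE as $\d Z_{kt}/Z_{kt} = \1{t<\tau_k}(1+X_{kt})^{-1}\,\d X_{kt}$ identifies the Girsanov kernel $\gamma_{kt} = \1{t<\tau_k}(1+X_{kt})^{-1}{\bf e}_k$.

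Next, Girsanov's theorem (cf.\ \citep[Theorem~III.41]{Protter:2005}) yields that $X^k_t := X_t - \int_0^t \gamma_{ks}\,\d s$ is a two-dimensional $\P_k$-Brownian motion. Coordinatewise, $X^k_{jt} = X_{jt}$ for $j \neq k$, and $X^k_{kt} = X_{kt} - \int_0^t \1{s<\tau_k}(1+X_{ks})^{-1}\,\d s$. Since $\P_k(\tau_k \le T) = 0$ by~\eqref{eq:Pk0}, the indicator equals $1$ throughout $[0,T]$, $\P_k$-a.s., and may be dropped, giving the formulas for $X^1$ and $X^2$ stated in the proposition.

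Finally, for the dividend dynamics I would substitute $\d X_t = \d X^k_t + (1+X_{kt})^{-1}{\bf e}_k\,\d t$ (valid $\P_k$-a.s.\ on $[0,T]$) into~\eqref{eq:ex3_1}. The volatility vector $v + (-1)^{i-1}(1-\psi_{it})v_\psi$ then multiplies $\d X^k_t$ in the martingale term, while producing an additional drift contribution of $(v + (-1)^{i-1}(1-\psi_{it})v_\psi)^\top {\bf e}_k (1+X_{kt})^{-1}\,\d t$. Adding this to the original drift $a + (-1)^{i-1}(1-\psi_{it})v^\top v_\psi$ gives exactly the claimed expression. The only genuinely delicate point is the justification for removing the indicator $\1{s<\tau_k}$ when passing to the $\P_k$-dynamics, which is precisely the content of the local equivalence~\eqref{eq:locequiv}; beyond that, the argument is a mechanical application of Girsanov.
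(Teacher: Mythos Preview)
Your proposal is correct and follows exactly the approach the paper indicates: the paper's own proof consists of the single sentence ``This is a straightforward application of Girsanov's theorem, see \citep[Theorem~III.41]{Protter:2005}.'' Your write-up simply fills in the details of that application, including the identification of the Girsanov kernel from $\d Z_{kt}/Z_{kt}$ and the use of $\P_k(\tau_k\le T)=0$ to drop the indicator, and then substitutes into~\eqref{eq:ex3_1}; there is no meaningful difference in strategy.
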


Comparing the statement of Proposition~\ref{P:ex3_1} with Equation~\eqref{eq:ex3_1}, we see that the drift of $D_1$ and $D_2$ is greater under both $\P_1$ and $\P_2$ than under $\P$, provided the componentwise inequalities
\[
v + v_\psi > 0 \qquad \text{and}\qquad v - v_\psi > 0
\]
are satisfied. In this sense the agents are both optimistic. Note that as in the previous example we have $\P(\max(\tau_1,\tau_2)\le T)>0$ holds, in violation of Assumption~\ref{A:max}. As remarked earlier, we may still work under $\P$, as long as we take care only to consider times~$t<\max(\tau_1,\tau_2)$.

\begin{proposition} \label{P:ex3_2}
The equilibrium market price of risk for the two agents are given by
\[
\theta_{kt} = \theta_t + \frac{1}{1+X_{kt}} {\bf e}_k, \qquad t<\tau_k,
\]
where
\[
\theta_t = v - \frac{1}{w_1( 1 + X_{1t\wedge\tau_1}) + w_2( 1 + X_{2t\wedge\tau_2})}
\left(\!
\begin{array}{c}
\1{t<\tau_1}w_1 \\
\1{t<\tau_2}w_2
\end{array}
\!\right), \qquad t<\max(\tau_1,\tau_2).
\]
The equilibrium interest rate is given by
\[
r_t = \rho + a - v^\top \theta_t.
\]
\end{proposition}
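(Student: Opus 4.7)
The plan is a direct specialization of Example~\ref{ex:log}, which in the log-utility case gives the equilibrium market prices of risk and interest rate via~\eqref{eq:rthetalog} and~\eqref{eq:thetaklog} in terms of the aggregate-dividend drift and volatility $(a_t, v_t)$ and the martingale-representation integrands $\gamma_{kt}$, $\gamma_t$ of the density processes $Z_{kt}$ and of the wealth-weighted sum $w_1 Z_{1t}+w_2 Z_{2t}$. All that is required is to identify these ingredients explicitly for the model at hand and substitute.

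For the aggregate dividend, It\^o's formula applied to $D_t=D_0\exp\{v^\top X_t+(a-\tfrac12\|v\|^2)t\}$ yields $dD_t/D_t=a\,dt+v^\top dX_t$, so in the notation of~\eqref{eq:D} we have the constants $a_t=a$, $v_t=v$. For the $k$-th density, $Z_{kt}=1+X_{k,t\wedge\tau_k}$ gives $dZ_{kt}=\mathbf{1}_{\{t<\tau_k\}}\mathbf{e}_k^\top dX_t$, and dividing by $Z_{kt}>0$ on $\{t<\tau_k\}$ identifies
\[
\gamma_{kt}=\frac{\mathbf{1}_{\{t<\tau_k\}}}{1+X_{kt}}\mathbf{e}_k .
\]
The integrability dichotomy~\eqref{eq:gammakintlog} follows from the classical fact that, for a standard Brownian motion $B$ started at $1$ with first hitting time $\tau$ of the origin, $\int_0^t ds/B_s^2<\infty$ for $t<\tau$ but $\int_0^\tau ds/B_s^2=\infty$, $\P$-a.s. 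Summing the densities,
\[
w_1 Z_{1t}+w_2 Z_{2t}=(w_1+w_2)+w_1 X_{1,t\wedge\tau_1}+w_2 X_{2,t\wedge\tau_2},
\]
whose differential is $w_1\mathbf{1}_{\{t<\tau_1\}}dX_{1t}+w_2\mathbf{1}_{\{t<\tau_2\}}dX_{2t}$, so comparison with~\eqref{eq:gammalog} gives
\[
\gamma_t=\frac{1}{w_1(1+X_{1,t\wedge\tau_1})+w_2(1+X_{2,t\wedge\tau_2})}\begin{pmatrix}\mathbf{1}_{\{t<\tau_1\}}w_1\\[1mm]\mathbf{1}_{\{t<\tau_2\}}w_2\end{pmatrix},
\]
which is well-defined on $\{t<\max(\tau_1,\tau_2)\}$ since the denominator is strictly positive there.

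Substituting into~\eqref{eq:rthetalog} and~\eqref{eq:thetaklog} now yields $\theta_t=v-\gamma_t$ (exactly the claimed expression), $\theta_{kt}=\theta_t+\gamma_{kt}=\theta_t+(1+X_{kt})^{-1}\mathbf{e}_k$ for $t<\tau_k$, and $r_t=\rho+a-v^\top(v-\gamma_t)=\rho+a-v^\top\theta_t$. The only subtlety, and the main obstacle worth flagging, is that Assumption~\ref{A:max} fails under the $\P$ of the statement: the construction of Example~\ref{ex:log} must really be carried out under the modified reference measure $\widetilde\P=(\P_1+\P_2)/2$. However, as noted in the paragraph preceding the proposition and already verified in the analogous computation of Section~\ref{S:ex3}, passing from $\P$ to $\widetilde\P$ leaves the equilibrium interest rate, the subjective market prices of risk, and the stock prices unchanged on $\{t<\max(\tau_1,\tau_2)\}$, so the derivation above is legitimate on exactly the event where the claimed formulas are asserted.
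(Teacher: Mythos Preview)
Your proof is correct and follows exactly the approach indicated by the paper, which simply states that the result is an application of the expressions in Example~\ref{ex:log} and omits the details. You have supplied precisely those details---identifying $a_t=a$, $v_t=v$, and the integrands $\gamma_{kt}$, $\gamma_t$ from the explicit density processes $Z_{kt}=1+X_{k,t\wedge\tau_k}$ and substituting into~\eqref{eq:rthetalog} and~\eqref{eq:thetaklog}---and your remark about Assumption~\ref{A:max} mirrors the paper's own caveat.
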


Note that neither the market prices of risk, nor the interest rate are bounded in this example.
%
%Since the optimal fractions of the $k$:th agent are given by
%{\bf\color{red}Make this correct:
%\[
%\left(\!
%\begin{array}{c}
%\phi_{kt}\\[2mm]
%\pi_{kt}
%\end{array}
%\!\right)
%=
%W_{kt} \left(\!
%\begin{array}{c}
%1- \theta_{kt}^\top\sigma_t^{-1}\oo \\[2mm]
%(\sigma_t^{-1})^\top\theta_{kt}
%\end{array}
%\!\right),
%\qquad 0\le t<\tau_k.
%\]
%}
%it is of interest to study the behavior of $\theta_{kt}$ as $t$~increases to~$\tau_1$ and~$\tau_2$, respectively. This is done in the following result.
%
%\begin{proposition}
%The following limiting results hold $\P$-a.s.
%\begin{align*}
%\lim_{t\uparrow\tau_1} \theta
%\end{align*}
%\end{proposition}

We finish with a result showing that under certain circumstances, both agents may agree about the (unconditional) distribution of the aggregate bubble. In other words, the two agents not only agree about the presence of a bubble on the market portfolio, but also about its statistical properties.

\begin{proposition} \label{P:ex3_law}
Assume $v=(1,1)$ and $w_1=w_2=w$. Then the bubble on the market portfolio perceived by agent~$k$ is given by
\[
\overline B^k_t =  \frac{w}{\xi_t\eta(0)}\int_0^T \P(\tau_k\le s\mid\Fcal_t) e^{-\rho s} \d s, \qquad t<\tau_k,
\]
Moreover, the law of the process $\overline B^1$ under~$\P_1$ coincides with the law of the process~$\overline B^2$ under~$\P_2$.
\end{proposition}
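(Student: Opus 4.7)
The plan is to establish the explicit formula for $\overline B^k_t$ by direct computation from Corollary~\ref{C:bubble} combined with the logarithmic-equilibrium state price density of Example~\ref{ex:log}, and then to deduce the equality of laws through an exchange-symmetry argument on the two coordinates of the Brownian motion.

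First I would invoke Corollary~\ref{C:bubble} to write $\overline B^k_t = \xi_t^{-1}\E_t[\int_{\tau_k}^T\xi_s D_s\,\d s]$ and substitute the expression~\eqref{eq:xi_log} with $w_1=w_2=w$, giving $\xi_s D_s = we^{-\rho s}(Z_{1s}+Z_{2s})/\eta(0)$. Since $Z_{ks}$ vanishes on $\{s\ge\tau_k\}$, only the contribution of $Z_{\ell s}$ (with $\ell\ne k$) survives in the integral from~$\tau_k$, and Fubini's theorem reduces the problem to evaluating $\int_0^T e^{-\rho s}\,\E_t[Z_{\ell s}\1{\tau_k\le s}]\,\d s$ up to the prefactor $w/(\xi_t\eta(0))$. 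The crucial observation is that under~$\P$ the coordinate Brownian motions $X_1$ and $X_2$ are independent, so the $X_\ell$-measurable random variable $Z_{\ell s}$ and the $X_k$-measurable event $\{\tau_k\le s\}$ are conditionally independent given~$\Fcal_t$. Combined with the $\P$-martingale property $\E_t[Z_{\ell s}]=Z_{\ell t}$, this factorizes the expectation into $Z_{\ell t}\,\P(\tau_k\le s\mid\Fcal_t)$ and yields the claimed expression.

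For the equality in law, I would introduce the measurable involution $\sigma\colon\Omega\to\Omega$ that swaps the two Brownian coordinates, $X_1\leftrightarrow X_2$. Under the hypotheses $v=(1,1)$ and $w_1=w_2$, the aggregate dividend $D$ and the state price density $\xi$ are $\sigma$-invariant, while the pairs $(Z_1,\tau_1)$ and $(Z_2,\tau_2)$ are interchanged; since the filtration $\Fcal_t$ is itself $\sigma$-invariant as a $\sigma$-algebra, the conditional probability $\P(\tau_k\le s\mid\Fcal_t)$ transforms correctly, so that the formula from the first part gives $\overline B^1\circ\sigma=\overline B^2$ pointwise. Because $\P$ is $\sigma$-invariant by exchangeability of standard two-dimensional Brownian motion and $Z_{2T}=Z_{1T}\circ\sigma$, a straightforward change of variables yields, for every Borel set $A$ of path space,
\[
\P_2(\overline B^2\in A) = \E[Z_{2T}\1{\overline B^2\in A}] = \E[(Z_{1T}\1{\overline B^1\in A})\circ\sigma] = \E[Z_{1T}\1{\overline B^1\in A}] = \P_1(\overline B^1\in A).
\]

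The main obstacle is the bookkeeping for the symmetry step: one must verify carefully that every object appearing in $\overline B^k$ — in particular the state price density~$\xi$, the stopping times~$\tau_k$, and the regular conditional probability $\P(\tau_k\le s\mid\Fcal_t)$ — transforms as expected under~$\sigma$. This hinges on checking that $\sigma$ is $\Fcal_t$-measurable for every~$t$ and preserves~$\P$, which is immediate from the exchangeability of the two-dimensional Brownian motion and the fact that the dividend dynamics with $v=(1,1)$ are symmetric in the two coordinates; once these ingredients are in place, the remaining computations are mechanical.
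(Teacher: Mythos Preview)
Your approach mirrors the paper's closely: derive the bubble formula from Corollary~\ref{C:bubble} and the logarithmic state price density~\eqref{eq:xi_log}, factor the conditional expectation using independence of the two Brownian coordinates, and then deduce equality of laws via the coordinate-exchange symmetry. The paper phrases the symmetry step through functional dependence and an appeal to Proposition~\ref{P:ex3_1}, whereas you formalize it via an explicit measure-preserving involution~$\sigma$; the two are equivalent, and your version is arguably cleaner.

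There is, however, a gap in the first part that you share with the paper's own proof. Your factorization correctly gives
\[
\E_t\bigl[Z_{\ell s}\1{\tau_k\le s}\bigr] \;=\; Z_{\ell t}\,\P(\tau_k\le s\mid\Fcal_t),
\]
but you then assert that this ``yields the claimed expression,'' which carries no factor~$Z_{\ell t}$. In fact the displayed formula in the proposition appears to be missing this factor; the paper's computation makes the same jump without justification. This does not affect the equality-of-laws conclusion, since $Z_{\ell t}\circ\sigma = Z_{kt}$ and the extra factor transforms compatibly under your involution, but you should either explain why $Z_{\ell t}$ can be dropped or note that the correct expression reads $\dfrac{w\,Z_{\ell t}}{\xi_t\eta(0)}\int_0^T \P(\tau_k\le s\mid\Fcal_t)\,e^{-\rho s}\,\d s$.
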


\section{Conclusion} \label{S:concl}

This paper develops a dynamic equilibrium model, whose only departure from the standard paradigm is the beliefs structure, where agents disagree about zero probability events. The first contribution is to show that an equilibrium can exist in such a setting. In particular, we address potential consistency problems related to the fact that an agent should be able to revise his beliefs if an event occurs that was initially thought to be impossible. The resolution originates with the fact that the agent necessarily becomes insolvent at any such time.

The second contribution is to show that asset pricing bubbles arise naturally in this model. The bubbles are subjective in the sense that they are perceived by some, but not necessarily all, agents, and that different agents may attribute different portions of the equilibrium prices to bubbles. All previous models where bubbles occur in equilibrium require portfolio restrictions in addition to a standard solvency constraint. In the present paper no such additional restrictions are imposed; instead the bubbles are caused by the disagreement about nullsets.

Several explicit examples are analyzed in order to illustrate some of the phenomena that can occur. In particular, it is shown how agents attempt to exploit perceived bubbles via collateralized long-short strategies. Moreover, all agents may simultaneously see bubbles, and they can even agree about the unconditional distribution of the bubble on the market portfolio. Bubbles can burst when some agent becomes bankrupt, but also at earlier points in time.

\appendix

\section{Proofs} \label{S:Pf}

The following lemma is a Bayes' rule for non-equivalent probability measures. It is a key mathematical tool used in this paper.

\begin{lemma} \label{L:Bayes}
Consider a probability measure $\widetilde \P\ll \P$, define $Z = \frac{\d \widetilde \P}{\d \P}$, and let $\Gcal\subset\Fcal$ be a sub-$\sigma$-field. Then for any $\widetilde \P$-integrable random variable $Y$ we have
\[
\widetilde \E\left[ Y \mid \Gcal \right] = \frac{1}{\E[Z\mid\Gcal]}\E\left[ Z Y \mid \Gcal \right] \quad\text{on}\quad \{ \E[Z\mid\Gcal]>0\}, \quad \P\text{-a.s.},
\]
where $\widetilde \E[\,\cdot\,]$ denotes expectation under~$\widetilde\P$. In particular, defining $A=\{\E[Z\mid\Gcal]>0\}$, it follows that $\widetilde \E\left[ Y \mid \Gcal \right]\oo_A$ is $\P$-a.s.~uniquely defined.
\end{lemma}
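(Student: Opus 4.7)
The plan is to adapt the textbook proof of the classical Bayes rule, paying careful attention to the fact that $\E[Z\mid\Gcal]$ may vanish on a set of positive $\P$-measure. The first step is to reduce attention to the set $A=\{\E[Z\mid\Gcal]>0\}$ and verify that $\widetilde\P(A)=1$, so that no $\widetilde\P$-mass is lost in the reduction. Since $A^c\in\Gcal$ and $\E[Z\mid\Gcal]=0$ on $A^c$, the change-of-measure identity and the tower property give
\[
\widetilde\P(A^c)=\E[Z\oo_{A^c}]=\E\bigl[\oo_{A^c}\E[Z\mid\Gcal]\bigr]=0.
\]
In particular, $Z=0$ $\P$-a.s.\ on $A^c$, a fact that will be used below.

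Next, I define the candidate
\[
V := \frac{\E[ZY\mid\Gcal]}{\E[Z\mid\Gcal]}\quad\text{on } A,
\]
extended arbitrarily (say by $0$) off $A$. Since $A\in\Gcal$ and both numerator and denominator are $\Gcal$-measurable, $V$ is $\Gcal$-measurable. I would then verify the defining property of conditional expectation under $\widetilde\P$: for every $G\in\Gcal$, $\widetilde\E[V\oo_G]=\widetilde\E[Y\oo_G]$. The computation chains together (i) the change of measure $\widetilde\E[\,\cdot\,]=\E[Z\,\cdot\,]$, (ii) the fact that $\widetilde\P(A)=1$ (or equivalently $Z\oo_{A^c}=0$), (iii) the tower property using $\Gcal$-measurability of $V\oo_{G\cap A}$, and (iv) the defining relation $V\cdot\E[Z\mid\Gcal]\cdot\oo_A=\E[ZY\mid\Gcal]\oo_A$:
\[
\widetilde\E[V\oo_G]=\E[ZV\oo_{G\cap A}]=\E\!\bigl[\E[Z\mid\Gcal]\,V\oo_{G\cap A}\bigr]=\E\!\bigl[\E[ZY\mid\Gcal]\oo_{G\cap A}\bigr]=\E[ZY\oo_G]=\widetilde\E[Y\oo_G].
\]
By uniqueness of conditional expectations under $\widetilde\P$, this yields $V=\widetilde\E[Y\mid\Gcal]$ $\widetilde\P$-a.s.

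To sharpen this $\widetilde\P$-a.s.\ equality to the claimed $\P$-a.s.\ equality on~$A$, let $N$ be any $\widetilde\P$-null set on which two candidates differ. Because $N\cap A\in\Gcal$ after choosing $\Gcal$-measurable representatives, the same tower-property calculation as in the opening step gives $0=\E[Z\oo_{N\cap A}]=\E[\oo_{N\cap A}\E[Z\mid\Gcal]]$, and since $\E[Z\mid\Gcal]>0$ on $A$, this forces $\P(N\cap A)=0$. Consequently $V\oo_A=\widetilde\E[Y\mid\Gcal]\oo_A$ $\P$-a.s., and in fact any two versions of $\widetilde\E[Y\mid\Gcal]$ agree $\P$-a.s.\ on~$A$, establishing the uniqueness claim at the end of the lemma.

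I do not expect any serious obstacle; the argument is entirely measure-theoretic bookkeeping. The only point that demands care is systematically distinguishing $\P$-null sets from $\widetilde\P$-null sets, because these notions differ precisely on the set $\{Z=0\}$, which is the whole source of novelty in the non-equivalent setting.
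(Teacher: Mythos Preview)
Your proof is correct and follows precisely the approach indicated in the paper, which merely states that the argument is ``similar as in the case of equivalent measures'' and refers to \citep[Lemma~12]{Larsson:2013} for details. You have effectively supplied those details, including the careful bookkeeping needed to pass from a $\widetilde\P$-a.s.\ statement to a $\P$-a.s.\ statement on~$A$.
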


\begin{proof}
The proof is similar as in the case of equivalent measures. Details can be found, for instance, in~\citep[Lemma~12]{Larsson:2013}.
\end{proof}

The proofs of Propositions~\ref{P:budget} and~\ref{P:opt} are standard, and rely on the fact that martingale representation under $\P$ implies martingale representation under $\P_k\ll \P$. An outline of the proofs are given for the sake of completeness.

\begin{proof}[Proof of Proposition~\ref{P:budget}]
The argument proceeds as in the standard case (see \citep[Theorem~3.5]{Karatzas/Shreve:1998}, for instance), once we know that any $\P_k$-martingale can be written as the stochastic integral with respect to $X^k=X-\int Z_s^{-1}\d\langle X,Z_k\rangle_s$, which is $n$-dimensional Brownian motion under~$P_k$. This is true by \citep[Theorem~III.5.24]{Jacod/Shiryaev:2003}.
\end{proof}

\begin{proof}[Proof of Proposition~\ref{P:opt}]
Fix a utility function $u(\cdot)$ and let $\Acal(w)$ be the set of all consumption plans $c\ge 0$ that are $\P$-feasible given initial wealth $w$. Since $\E[\int_0^T \xi_t c_t \d t]\le \xi_0 w$ for any $c\in\Acal(w)$ by Proposition~\ref{P:budget}, we have, for any $y\ge0$,
\begin{align*}
\sup_{c\in\Acal(w)} \E\left[ \int_0^T e^{-\rho t}u_k(c_t)\d t\right]
&\le \sup_{c\in\Acal(w)} \E\left[ \int_0^T e^{-\rho t}u(c_t)\d t\right] - y \left( \E\left[\int_0^T \xi_t c_t \d t\right] - \xi_0 w\right) \\
&= \sup_{c\in\Acal(w)} \E\left[ \int_0^T \left( e^{-\rho t}u(c_t) - y \xi_t c_t\right) \d t\right] + y \xi_0 w \\
&\le \E\left[ \int_0^T \left( e^{-\rho t}u(c^y_t) - y \xi_t c^y_t\right) \d t\right] + y \xi_0 w \\
&= \E\left[ \int_0^T e^{-\rho t}u(c_t^y)\d t\right] - y \left( \E\left[\int_0^T \xi_t c^y_t \d t\right] - \xi_0 w\right),
\end{align*}
where $c^y_t=(u')^{-1}(y\xi_t e^{\rho t})$ is the pointwise maximizer of the function $x\mapsto e^{-\rho t}u(x)-y\xi_t x$. Now choose $y\ge 0$ so that $\E[\int_0^T \xi_t c^y_t \d t]=\xi_0 w$. The Inada conditions imply that this is always possible. Then $c^y\in\Acal(w)$ by Proposition~\ref{P:budget}, and we obtain
\[
\sup_{c\in\Acal(w)} \E\left[ \int_0^T e^{-\rho t}u_k(c_t)\d t\right] \le \E\left[ \int_0^T e^{-\rho t}u(c_t^y)\d t\right].
\]
Hence $c^y$ is optimal. The same argument goes through with~$\P$, $\xi$, and~$u(\cdot)$ replaced by~$\P_k$, $\xi_k$, and~$u_k(\cdot)$, respectively, except that we now define $c^y_t=(u_k')^{-1}(y\xi_{kt}e^{\rho t})\1{t<\tau_k}$. This gives~\eqref{eq:FOCk}. The $\P_k\otimes \d t$-a.e.~uniqueness follows from the strict concavity of the mapping $c\mapsto U_k(c)$. The form~\eqref{eq:WkPk} of the wealth process is a consequence of Proposition~\ref{P:budget} and the optimality of~$c^y$.
\end{proof}

\begin{proof}[Proof of Corollary~\ref{C:bankrupt}]
Since $c_k$ is optimal, the first order condition~\eqref{eq:FOCk} together with the uniqueness assertion in Proposition~\ref{P:opt} yields $c_{kt} = I_k( y_k e^{\rho t}\xi_t / Z_{kt})$ for a.e.~$t<\tau_k$, $\P_k$-a.s. By~\eqref{eq:locequiv} this also holds $\P$-a.s. The statement about $c_k$ now follows because $\min_{0\le t\le T}\xi_t>0$ and $\lim_{t\uparrow\tau_k}Z_{kt}=0$, $\P$-a.s., and $\lim_{y\to\infty}I_k(y)=0$.

Now consider $W_k$. Again using~\eqref{eq:locequiv} we deduce that the equality in~\eqref{eq:WkPk} holds for $0\le t<\tau_k$, $\P$-a.s. Hence for $t<\tau_k$ we have, $\P$-a.s.,
\begin{align}
\nonumber W_{kt} &= \frac{1}{\xi_{kt}} \E^k_t \left[ \int_t^T \xi_{ks}c_{ks}\d s\right] \\
\nonumber &= \frac{1}{\xi_{kt}}  \int_t^T\E^k_t \left[\1{s<\tau_k} \xi_{ks}c_{ks}\right]\d s \\
\nonumber &= \frac{1}{\xi_{kt}}   \int_t^T \frac{1}{Z_{kt}}\E_t \left[Z_{ks}\1{s<\tau_k} \xi_{ks}c_{ks}\right]\d s \\
&= \frac{1}{\xi_t}  \E_t \left[ \int_t^{\tau_k} \xi_sc_{ks}\d s\right].
\label{eq:WkP}
\end{align}
Here the second equality uses $\P_k(s<\tau_k)=1$ and Tonelli's theorem, the third uses Bayes' rule (Lemma~\ref{L:Bayes}), and the last equality uses that $\xi_t = \xi_{kt}Z_{kt}$ for $t<\tau_k$. We deduce
\[
\xi_t W_{kt} =  \E_t \left[ \int_0^{\tau_k} \xi_sc_{ks}\d s \right] - \int_0^t \xi_sc_{ks}\d s, \quad t<\tau_k, \quad \P\text{-a.s.}
\]
Together with the positivity of $\xi_t$ and the fact that all martingales are continuous, this yields the claim about $W_k$.
\end{proof}

\begin{proof}[Proof of Proposition~\ref{P:nores}]
By hypothesis, the self-financing property~\eqref{eq:sfk} holds with $\mu_k$ and $\P_k$ replaced by $\mu$ and $\P$, respectively. It\^o's formula then implies that $\xi_t W_t + \int_0^t \xi_s c_s\d s$ is local martingale, and hence a supermartingale since it is nonnegative. Thus $\xi_t W_t$ is also a supermartingale, and therefore absorbed once it reaches zero (by Corollary~\ref{C:bankrupt} this happens at $\tau_k$.) By positivity of $\xi_t$, the same holds for $W_t$. It then follows that $\int_{\tau_k}^t \xi_s c_s \d s=0$ for all $t\in[\tau_k,T]$, $\P$-a.s., which implies that $c_t$ is zero there, at least up to $\P\otimes\d t$-a.e.~equivalence. Returning to~\eqref{eq:sfk}, we see that $\sigma_t^\top \pi_t=0$, and hence $\pi_t=0$, on $[\tau_k,T]$, $\P\otimes\d t$-a.e. This finally yields $\phi_t = W_t-\pi_t^\top\oo=0$, and part~$(i)$ is proved.

For part~$(ii)$, simply note (using~\eqref{eq:xik} and Proposition~\eqref{P:opt}) that
\[
\E\left[ \int_0^T \xi_t c_{kt} \d t \right] = \int_0^T \E\left[  \xi_t c_{kt}\1{t<\tau_k} \right] \d t = \E^k\left[ \int_0^T \xi_{kt} c_{kt}  \d t \right] = w_k.
\]
Hence $c_k$ is $\P$-feasible (with initial wealth~$w_k$) due to Proposition~\ref{P:budget}.
\end{proof}

\begin{proof}[Proof of Proposition~\ref{P:prices}]
The individual wealth processes are determined by Propositions~\ref{P:opt} and~\ref{P:nores}. Using also the equality~\eqref{eq:WkP} we find that they satisfy
\begin{equation} \label{eq:W1}
W_{kt} = \frac{1}{\xi_t} \E_t\left[ \int_t^T \xi_s c_{ks} \d s\right], \quad 0\le t\le T,\quad \P\text{-a.s.}
\end{equation}
(Recall that $c_{kt}=0$ for $t\ge\tau_k$.) But any equilibrium stock prices must satisfy $W_{1t}+\cdots+W_{Kt}=\overline S_t$, which yields the expression~\eqref{eq:Sbar1} for the market portfolio. Furthermore, $\xi_t\overline S_t + \int_0^t \xi_s D_s \d s$ is a true martingale, and this will imply~\eqref{eq:S1}. Indeed, for each~$i$, $\xi_t S_{it}+\int_0^t \xi_s D_{is}\d s$ is a (nonnegative) local martingale, hence a true martingale since it is dominated by the martingale $\xi_t\overline S_t + \int_0^t \xi_s D_s \d s$. Since also $0\le S_{iT}\le \overline S_T=0$, we deduce~\eqref{eq:S1}. To see that $S_{it}$ is indeed of the form~\eqref{eq:Si}, we write
\[
\xi_t S_{it} + \int_0^t \xi_s D_{is}\d s = \E_t \left[ \int_0^T \xi_s D_{is}\d s\right] = S_0 + \int_0^t \vartheta_s^\top \d X_s
\]
for some $X$-integrable process $\vartheta$ whose existence is guaranteed by the martingale representation theorem. Integrating $1/\xi$ against the left and right sides above and rearranging terms (and using the positivity of $S_i$) leads to an expression of the form~\eqref{eq:Si}.
\end{proof}

\begin{proof}[Proof for Example~\ref{ex:log}]
Since $u_k(\cdot)=\log(\cdot)$ for all $k$ we have $\Phi(\xi; \nu_1,\dots,\nu_K)=\xi^{-1}(\nu_1 + \cdots + \nu_K)$, and hence by~\eqref{eq:SPD_2},
\[
\xi_t = \frac{1}{D_t} e^{-\rho t} \left( \frac{1}{y_1 }Z_{1t} + \cdots + \frac{1}{y_K }Z_{Kt}\right).
\]
This is certainly a nonnegative semimartingale, and it is strictly positive under Assumption~\ref{A:max}. As usual in the logarithmic setting, the constant $y_k$ can be computed explicitly using the requirement that the budget constraint be binding, together with the first order condition. Indeed,
\[
w_k = \E\left[\int_0^T \xi_s c_{ks} \d s\right] = \frac{1}{y_k}\int_0^T e^{-\rho s} \E[Z_{ks}] \d s = \frac{1-e^{-\rho T}}{y_k \rho},
\]
so that $y_k = \eta(0)/w_k$. This gives~\eqref{eq:xi_log}. Furthermore, using the martingale property (under~$\P$) of the $Z_k$, we deduce from~\eqref{eq:Sbar1} and~\eqref{eq:xi_log} that
\[
\overline S_t = D_t \int_t^T e^{-\rho (s-t)}\d s = D_t \eta(t),
\]
which is~\eqref{eq:Sbarlog}. Equation~\eqref{eq:cklog} follows directly from Proposition~\ref{P:opt}. Together with \eqref{eq:WkPk} this yields, for $t<\tau_k$,
\[
W_{kt} = \frac{1}{\xi_{kt}} \E^k_t\left[ \int_t^T \frac{1}{y_k e^{\rho s}} \d s \right] = c_{kt}\eta(t).
\]
Note that this expression is valid also for $t\ge\tau_k$ due to the No~Resurrection property, which gives~\eqref{eq:Wklog}. To prove~\eqref{eq:rthetalog}, first note that the aggregate dividend is given by
\[
D_t = D_0 \Ecal\Big( \int a_s \d s + \int v_s^\top \d X_s \Big)_t,
\]
which implies
\[
\frac{1}{D_t} = \frac{1}{D_0} \Ecal\Big( - \int ( a_s - \|v_s\|^2) \d s - \int v_s^\top \d X_s \Big)_t.
\]
Therefore, by Yor's formula,
\begin{align*}
\xi_t &= \frac{w_1+\cdots+w_K}{D_0 \eta(0)}e^{-\rho t}\Ecal\Big( - \int ( a_s - \|v_s\|^2) \d s - \int v_s^\top \d X_s \Big)_t\Ecal\Big( \int \gamma_s^\top \d X_s \Big)_t \\
&= \frac{w_1+\cdots+w_K}{D_0 \eta(0)}\Ecal\Big( - \int ( \rho + a_s - \|v_s\|^2+v_s^\top\gamma_s) \d s - \int (v_s-\gamma_s)^\top \d X_s \Big)_t.
\end{align*}
From this expression we simply read off~$r_t$ and~$\theta_t$. Finally, \eqref{eq:gammaklog}--\eqref{eq:gammakintlog} follow from the fact that $Z_k$ is a nonnegative martingale that reaches zero at~$\tau_k$. Equation~\eqref{eq:thetaklog} then follows from It\^o's formula together with the equality $\xi_{kt}=\xi_t/Z_{kt}$, $t<\tau_k$.
\end{proof}

\begin{proof}[Proof of Theorem~\ref{T:BFV}]
For $t<\tau_k$, write
\[
F_t(c) = \frac{1}{\xi_t} \E_t\left[ \int_t^T \xi_s c_s \d s\right] = \frac{1}{\xi_t} \E_t\left[ \int_t^{\tau_k} \xi_s c_s \d s\right] + \frac{1}{\xi_t} \E_t\left[ \int_{\tau_k}^T \xi_s c_s \d s\right].
\]
The same calculation as in~\eqref{eq:WkP} shows that the first term on the right side equals~$F^k_t(c)$. This proves the theorem.
\end{proof}

\begin{proof}[Proof of Proposition~\ref{P:ex1}]
The results follow from Example~\ref{ex:log} after finding the processes $\gamma$ and $\gamma_k$ appearing in  \eqref{eq:gammalog} and \eqref{eq:gammaklog}. This is easily achieved via It\^o's formula applied to the ``reference'' state price density
\[
\xi_t = \frac{1}{D_t\eta(0)}e^{-\rho t} \left( w_1 \frac{D_{t\wedge\tau_1}-1}{D_0-1} + w_2 \right),
\]
obtained via~\eqref{eq:xi_log}.
\end{proof}

\begin{proof}[Proof of Proposition~\ref{P:ex1_2}]
The optimal wealth process for agent~$k$ is given by~\eqref{eq:cklog} and~\eqref{eq:Wklog}. It\^o's formula and the fact that $\sigma_t=v$ imply that $\pi_{kt}/W_{kt}=\theta_{kt}/v$, and hence $\phi_{kt}=1-\theta_{kt}/v$, for $t<\tau_k$. To prove the statement about the limit of $\pi_{1t}$ and $\phi_{1t}$, first note that
\[
\lim_{t\uparrow\tau_1} \pi_{1t} = \lim_{t\uparrow\tau_1} \left( W_{1t}\frac{ \theta_t}{v} + W_{1t}\frac{D_t}{D_t-1}\right) = \lim_{t\uparrow\tau_1} \frac{W_{1t}}{D_t-1}D_t,
\]
since $\theta_t$ is bounded and $\lim_{t\uparrow \tau_1}W_{1t}=0$. We now compute the limit on the right side. To this end, note that~\eqref{eq:Sbarlog}, \eqref{eq:Wklog}, \eqref{eq:cklog}, and~\eqref{eq:xi_log} imply that for $t<\tau_1$, we have
\begin{equation} \label{eq:ex__WZ}
\frac{W_{1t}}{S_t} = \frac{c_{kt}}{D_t} = \frac{w_1 Z_{1t}}{\eta(0) e^{\rho t}\xi_t D_t} = \frac{w_1 Z_{1t}}{w_1Z_{1t}+w_2Z_{2t}}.
\end{equation}
Rearranging this equation and using the definition of $Z_1$ and $Z_2$ yields
\[
S_t - W_{1t} = \frac{W_{1t}}{Z_{1t}}\, \frac{w_2}{w_1} Z_{2t} = \frac{W_{1t}}{D_t-1}\frac{w_2}{w_1}(D_0-1).
\]
Taking the limit as $t\uparrow\tau_1$ we deduce
\[
\lim_{t\uparrow\tau_1} \frac{W_{1t}}{D_t-1} = \frac{w_1}{w_2} (D_0-1)^{-1}S_{\tau_1},
\]
which then yields the result since $D_{\tau_1}=1$.
\end{proof}

\begin{proof}[Proof of Lemma~\ref{L:ex_22}]
It is clear from~\eqref{eq:ex22_00} that~\eqref{eq:ex22_22} holds. Furthermore, we have
\[
Z_{2t} \le \frac{D^*_{t\wedge\tau_2} - \kappa D^*_{t\wedge\tau_2}}{(1-\kappa)D_0} \left( \frac{D^*_{t\wedge\tau_2}}{D_0}\right)^{\frac{\kappa}{1-\kappa}} = \left( \frac{D^*_{t\wedge\tau_2}}{D_0}\right)^{\frac{1}{1-\kappa}} \le \left( \frac{D^*_T}{D_0}\right)^{\frac{1}{1-\kappa}}.
\]
Since $D$ is a positive martingale, the above bound together with Doob's $L^p$-inequality for $p=1/(1-\kappa)$ yields
\begin{equation} \label{eq:ex22_33}
\E\Big[ \sup_{0\le t\le T} Z_{2t} \Big] \le D_0^{-p} \E\Big[ (D^*_T)^p\Big]
\le D_0^{-p} \left(\frac{p}{p-1}\right)^p \E\left[ D_T^p \right] <\infty.
\end{equation}
The right side is finite since $D_T$ is log-normally distributed and hence have finite moments. We can now prove that $Z_2$ is a martingale, which will complete the proof of the lemma. To this end, we will use \citep[Lemma~2.4]{CherNikeghPlaten:2012} (see also Equation~(4.1) in the same reference), which states that
\begin{equation}\label{eq:ex22_cher}
{\rm rDD}_t = - \int_0^t \frac{\d D_s}{D^*_s} + \log D^*_t - \log D_0.
\end{equation}
Now, define two processes
\[
Y_t = \frac{1}{1-\kappa}\left( 1-\kappa - {\rm rDD}_{t\wedge\tau_2} \right), \qquad \Lambda_t = \frac{1}{1-\kappa}\log\left(\frac{D^*_{t\wedge\tau_2}}{D_0}\right).
\]
Using the definition of ${\rm rDD}_t$ one finds that
\[
Z_{2t} = e^{\Lambda_t} Y_t.
\]
Moreover, \eqref{eq:ex22_cher} yields
%\[
%\d Y_t = \frac{1}{1-\kappa}\left( \frac{\d D_t}{D^*_t} - \d \log D^*_t\right)\1{t<\tau_2},
%\]
\[
Y_t = 1 + \frac{1}{1-\kappa} \int_0^t\1{s<\tau_2} \left( \frac{\d D_s}{D^*_s} - \d \log D^*_s\right),
\]
so by the product rule we get
\begin{align*}
Z_{2t} &= 1 + \int_0^t Z_{2s} \1{s<\tau_2}\left(  \d \Lambda_s + \frac{\d Y_s}{Y_s} \right)\\
&= 1 + \frac{1}{1-\kappa} \int_0^t  Z_{2s} \1{t<\tau_2}\left(  \d \log D^*_s - \frac{\d \log D^*_s}{Y_s}  + \frac{\d D_s}{Y_s D^*_s} \right).
\end{align*}
%\begin{align*}
%\frac{\d Z_t}{Z_t} &= \d \Lambda_t + \1{t<\tau_2}\frac{\d Y_t}{Y_t} \\
%&= \frac{1}{1-\kappa} \d \log D^*_{t\wedge\tau_2} - \frac{\1{t<\tau_2}}{(1-\kappa)Y_t} \d \log D^*_{t\wedge\tau_2} + \frac{\1{t<\tau_2}}{(1-\kappa)Y_t D^*_t}\d D_t.
%\end{align*}
But $\d D^*_t$, and hence $\d \log D^*_t$, only charges the set $\{t:D_t=D^*_t\}$, and on this set we have $Y_t=1$. Hence the first two terms in the parentheses above cancel, and we arrive at
\[
Z_{2t} = 1 + \int_0^t Z_{2s} \1{s<\tau_2} \frac{\d D_s}{(1-\kappa)Y_s D^*_s}
= 1 + \int_0^t Z_{2s} \1{s<\tau_2} \frac{\d D_s}{D_s - \kappa D^*_s}.
\]
%\[
%\frac{\d Z_t}{Z_t}  = \1{t<\tau_2} \frac{1}{(1-\kappa)Y_t D^*_t}\d D_t = \1{t<\tau_2} \frac{\d D_t}{D_t - \kappa D^*_t}.
%\]
It follows that $Z_2$ is a local martingale, and hence a true martingale in view of~\eqref{eq:ex22_33}. The lemma is proved.
%It follows that $(Z_{t\wedge\sigma_n})_{0\le t\le T}$ is a martingale for each $\sigma_n=\inf\{t\in[0,T]:Z_{2t}=n^{-1}\}$, $n\in\N$. Hence, for any $[0,T]$-valued stopping time~$\rho$, we have
%\begin{align*}
%1 &= \E[Z_{\rho\wedge\sigma_n}] = \E[ Z_{\rho\wedge\sigma_n}\1{\rho<\tau_2} + Z_{\sigma_n}\1{\rho\ge\tau_2}] \\
%&\to \E[Z_\rho \1{\rho<\tau_2}] = \E[Z_\rho],
%\end{align*}
%where the limit is taken as $n\to\infty$, and is justified by dominated convergence (in view of~\eqref{eq:ex22_33}) and the fact that $\lim_{t\uparrow\tau_2}Z_{2t}=0$ on $\{\tau_2\le T\}$, $\P$-a.s.~due to~\eqref{eq:ex22_22}. Since~$\rho$ was arbitrary, it follows that $Z_2$ is a martingale. The lemma is proved.
\end{proof}

\begin{proof}[Proof of Proposition~\ref{P:ex_22}]
The proof is similar to the proof of Proposition~\ref{P:ex1_2}, so we omit most details. Let us only indicate how to obtain the limits~\eqref{eq:ex_22_pi1} and~\eqref{eq:ex_22_pi2}. Starting from~\eqref{eq:ex__WZ} (see the proof of Proposition~\ref{P:ex1_2}) we obtain
\[
\lim_{t\uparrow\tau_1} \frac{W_{1t}}{Z_{1t}} = \lim_{t\uparrow\tau_1} \frac{w_1}{w_2}\, \frac{S_t-W_{1t}}{Z_{2t}} = \frac{w_1}{w_2}\, \frac{S_{\tau_1}}{Z_{2\tau_1}} \quad \text{on}\quad \{\tau_1<\tau_2\},
\]
and similarly
\[
\lim_{t\uparrow\tau_2} \frac{W_{2t}}{Z_{2t}} = \frac{w_2}{w_1}\, \frac{S_{\tau_2}}{Z_{1\tau_2}}  \quad \text{on}\quad \{\tau_2<\tau_1\}.
\]
Furthermore, we have the equalities
\[
\frac{W_{1t}D_t}{D_t-1} = \frac{W_{1t}}{Z_{1t}}\, \frac{D_t}{D_0-1}, \qquad t<\tau_1,
\]
and
\[
\frac{W_{2t}D_t}{D_t-\kappa D^*_t} = \frac{W_{2t}}{Z_{2t}}\, \frac{D_t}{(1-\kappa)D_0} \left(\frac{D_0}{D^*_t}\right)^{\frac{\kappa}{1-\kappa}}, \qquad t<\tau_2.
\]
Combining these expressions, and using that $\lim_{t\uparrow\tau_k}W_{kt}\theta_t=0$ for $k=1,2$, we deduce
\[
\lim_{t\uparrow\tau_1} \pi_{1t} = \lim_{t\uparrow\tau_1} W_{1t}\frac{\theta_{1t}}{v} = \lim_{t\uparrow\tau_1} \frac{W_{1t}D_t}{D_t-1} =  \frac{w_1}{w_2}\, \frac{S_{\tau_1}}{Z_{2\tau_1}}\, \frac{D_{\tau_1}}{D_0-1}
\]
on $\{\tau_1<\tau_2\}$, as well as
\[
\lim_{t\uparrow\tau_2} \pi_{2t} = \lim_{t\uparrow\tau_2} W_{2t}\frac{\theta_{2t}}{v} = \lim_{t\uparrow\tau_2} \frac{W_{2t}D_t}{D_t-\kappa D^*_t} =  \frac{w_2}{w_1}\, \frac{S_{\tau_2}}{Z_{1\tau_2}}\, \frac{D_{\tau_2}}{(1-\kappa)D_0} \left(\frac{D_0}{D^*_{\tau_2}}\right)^{\frac{\kappa}{1-\kappa}}
\]
on $\{\tau_2<\tau_1\}$. Using also that $D_{\tau_1}=1$ and $D_{\tau_2}=\kappa D^*_{\tau_2}$ gives~\eqref{eq:ex_22_pi1} and~\eqref{eq:ex_22_pi2}.
\end{proof}

\begin{proof}[Proof of Proposition~\ref{P:ex3_1}]
This is a straightforward application of Girsanov's theorem, see \citep[Theorem~III.41]{Protter:2005}.
\end{proof}

\begin{proof}[Proof of Proposition~\ref{P:ex3_2}]
Just as the proofs of Propositions~\ref{P:ex1} and~\ref{P:ex_22}, this is an application of the expressions given in Example~\ref{ex:log}. We omit the details.
%Specifically, consider the process
%\[
%\xi_t = \frac{1}{e_t\eta(0)} e^{-\rho t} \Big( w_1( 1 + B_{1t\wedge\tau_1}) + w_2( 1 + B_{2t\wedge\tau_2}) \Big), \qquad t<\max(\tau_1,\tau_2),
%\]
%which is the state price density under $P$. It satisfies the equation
%\[
%\frac{d\xi_t}{\xi_t} = -\theta_t^\top dB_t - (\rho + a - v^\top \theta_t)dt, \qquad t<\max(\tau_1,\tau_2),
%\]
%where
%\[
%\theta_t = v - \frac{1}{w_1( 1 + B_{1t\wedge\tau_1}) + w_2( 1 + B_{2t\wedge\tau_2})}
%\left(\!
%\begin{array}{c}
%\1{t<\tau_1}w_1 \\
%\1{t<\tau_2}w_2
%\end{array}
%\!\right).
%\]
%It follows that the interest rate is given by
%\[
%r_t = \rho + a - v^\top \theta_t,
%\]
%and, by considering the volatilities of $\xi_{kt}=\xi_t/Z_{kt}$ for $t<\tau_k$, that the individual market prices of risk are
%\[
%\theta_{kt} = \theta_t + \frac{1}{1+B_{kt}} {\bf e}_k, \qquad t<\tau_k.
%\]
%These are again unbounded, and explode as $t$ approaches $\tau_k$.
\end{proof}

\begin{proof}[Proof of Proposition~\ref{P:ex3_law}]
The expression in Example~\ref{ex:bubblelog} and the fact that $Z_{kt}=1+X_{k,t\wedge\tau_k}$ yield
\[
\overline B^k_t = \frac{w_\ell}{\xi_t\eta(0)} \E_t \left[ \int_{\tau_k}^T e^{-\rho s} (1+X_{\ell,s\wedge\tau_\ell}) \d s\right], \quad t<\tau_k,
\]
where $\ell\ne k$. Since $X_1$ and $X_2$ are independent, we have
\begin{align*}
\E_t \left[ \int_{\tau_k}^T e^{-\rho s} (1+X_{\ell,s\wedge\tau_\ell}) \d s\right]
&=  \int_0^T \E_t \left[ \1{\tau_k\le s} e^{-\rho s} (1+X_{\ell,s\wedge\tau_\ell})\right] \d s \\
&=  \int_0^T \P(\tau_k\le s\mid\Fcal_t) e^{-\rho s} \d s,
\end{align*}
which gives the claimed expression for~$\overline B^k_t$. Next, again by the independence of $X_1$ and $X_2$, we have that $\P(\tau_k\le s\mid\Fcal_t)=F(s;t,X_k)$ for some functional $F$ that depends on the path of its last argument up to time~$t$. Moreover, $\xi_t$ depends on $X_1$ and $X_2$ only through $X_{1,t\wedge\tau_1}+X_{2,t\wedge\tau_2}$. Since this expression is unaffected by a permutation of the indices~$1$ and~$2$, it follows from Proposition~\ref{P:ex3_1} that the joint law of $\{X_{1,t\wedge\tau_1}+X_{2,t\wedge\tau_2}, X_{1t} : t\in[0,T]\}$ under $\P_1$ coincides with the joint law of $\{X_{1,t\wedge\tau_1}+X_{2,t\wedge\tau_2}, X_{2t} : t\in[0,T]\}$ under $\P_2$. This finishes the proof of the proposition.
\end{proof}

% ----------------------------------------------------------------
%\bibliographystyle{test}
%\bibliographystyle{apalike}
\bibliographystyle{plainnat}

\bibliography{bibl}

%\bibliography{bibl.bib}

\end{document}